\newcommand{\F}{\mathbb{F}}
\newcommand{\N}{\mathbb{N}}
\newcommand{\rk}{\mathrm{rank}}
\newcommand{\GL}{\mathrm{GL}}
\newtheorem{lemma}{Lemma}[section]
\newtheorem{theorem}[lemma]{Theorem}
\newtheorem{proposition}[lemma]{Proposition}
\newtheorem{corollary}[lemma]{Corollary}
\newtheorem{definition}[lemma]{Definition}
\newtheorem{example}[lemma]{Example}
\title{New Criteria for MRD and Gabidulin Codes\\ and some Rank-Metric Code Constructions}
\author{Anna-Lena Horlemann-Trautmann$^*$}
\affil{Algorithmics Laboratory, EPF Lausanne, Switzerland}
\author{Kyle Marshall\thanks{The authors were partially supported by SNF grant no.\ 149716.}}
\affil{Institute of Mathematics, University of Zurich, Switzerland}
\date{}
\begin{document}

\maketitle

\section{Introduction}

Codes in the rank metric have been studied for the last four decades. For linear codes a Singleton-type bound can be derived for these codes. In analogy to MDS codes in the Hamming metric, we call rank-metric codes that achieve the Singleton-type bound MRD (maximum rank distance) codes. Since the works of Delsarte \cite{de78} and Gabidulin \cite{ga85a} we know that linear MRD codes exist for any set of parameters.  The codes they describe are called Gabidulin codes. 

Moreover, Berger in  \cite{be03} and Morrison in \cite{mo14} showed what the linear and semi-linear isometries of rank-metric codes are. It is an open question if there are other general constructions of MRD codes that are not equivalent (under the isometries) to Gabidulin codes. Recently several results have been established in this direction, e.g.\ in \cite{co15,cr15,sh15}, where many of the derived codes are not linear over the underlying field but only linear over some subfield of it. Hence it is still an open question to find other constructions of non-Gabidulin MRD codes.

In this paper we want to derive criteria for MRD and Gabidulin codes and use these to come up with new non-Gabidulin MRD codes that are linear over the original field, not only a subfield. Moreover, we want to give a classification of these codes and investigate how many different equivalence classes of MRD codes we get for small parameters.

This paper is structured as follows. In Section \ref{sec:preliminaries} we give some preliminaries on finite fields, rank-metric codes and Gabidulin codes. In Section \ref{sec:criterionMRD} we present a new criterion for MRD codes, in Section \ref{sec:criterionGab} we derive a criterion for Gabidulin codes. In Section \ref{sec:constructions} we use the results of Sections \ref{sec:criterionMRD} and \ref{sec:criterionGab} to find new non-Gabidulin MRD codes for small parameters. We conclude in Section \ref{sec:conclusion}.


\section{Preliminaries}\label{sec:preliminaries}

Let $q$ be a prime power and let $\F_q$ denote the finite field with $q$ elements. It is well-known that there always exists a primitive element $\alpha$ of the extension field $\F_{q^m}$, such that $\F_{q^m}\cong \F_q[\alpha] $. Moreover, $\F_{q^m}$  is isomorphic (as a vector space over $\F_q$) to the vector space $\F_q^m$. 
One then easily obtains the isomorphic description of matrices over the base field $\F_q$ as vectors over the extension field, i.e.\ $\F_q^{m\times n}\cong \F_{q^m}^n$. Since we will work with matrices over different underlying fields, we denote the rank of a matrix $X$ over $\F_q$ by $\rk_q(X)$.

\begin{definition}
 The \emph{rank distance} $d_R$ on  $\F_q^{m\times n}$ is defined by
\[d_R(X,Y):= \rk_q(X-Y) , \quad X,Y \in \F_q^{m\times n}. \]
Analogously, we define the rank distance between two elements $\boldsymbol x,\boldsymbol y \in \F_{q^m}^n$ as the rank of the difference of the respective matrix representations in $\F_q^{m\times n}$.
\end{definition}

In this paper we will focus on $ \F_{q^m}$-linear rank-metric codes in $\F_{q^m}^n$, i.e.\ those codes that form a vector space over  $ \F_{q^m}$. Whenever we talk about linear codes in this work, we will mean linearity over the extension field $ \F_{q^m}$. 
The well-known Singleton bound for codes in the Hamming metric implies also an upper bound for codes in the rank metric:
\begin{theorem}\cite[Section~2]{ga85a}
 Let $\mathcal{C}\subseteq \F_{q^m}^{n}$ be a linear matrix code with minimum rank distance $d$ of dimension $k$ (over $\F_{q^m}$). Then 
$$ k\leq n-d+1  .$$
\end{theorem}

\begin{definition}
 A code attaining the Singleton bound is called a \emph{maximum rank distance (MRD) code}.
\end{definition}

For some vector $(v_1,\dots, v_n) \in \F_{q^m}^n$ we denote the $k \times n$ \emph{Moore matrix} by
\[M_k(v_1,\dots, v_n) := \left( \begin{array}{cccc}  v_1 & v_2 &\dots &v_n \\ v_1^{[1]} & v_2^{[1]} &\dots &v_n^{[1]} \\ &&\vdots \\  v_1^{[k-1]} & v_2^{[k-1]} &\dots &v_n^{[k-1]} \end{array}\right)   ,\]
where $[i]:= q^i$.
\begin{definition}\label{def:Gab}
Let $g_1,\dots, g_n \in \F_{q^m}$ be linearly independent over $\F_q$. We define a \emph{Gabidulin code} $\mathcal{C}\subseteq \F_{q^m}^{n}$ of dimension $k$ as the linear block code with generator matrix $M_k(g_1,\dots, g_n)$.        
Using the isomorphic matrix representation we can interpret $\mathcal{C}$ as a matrix code in $\F_q^{m\times n}$.
\end{definition}

\begin{theorem}\cite[Section4]{ga85a}
A Gabidulin code $\mathcal{C}\subseteq \F_{q^m}^{n}$ of dimension $k$ over $\F_{q^m}$ has minimum rank distance 
 $n-k+1$. Thus Gabidulin codes are MRD codes.
\end{theorem}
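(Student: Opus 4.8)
The plan is to identify each codeword with the evaluation of a $q$-linearized polynomial and to translate the rank weight into the dimension of a kernel. Writing a message as $(a_0,\dots,a_{k-1})\in\F_{q^m}^k$, the corresponding codeword (the message times $M_k(g_1,\dots,g_n)$) is the row vector $(f(g_1),\dots,f(g_n))$, where
\[ f(x)=\sum_{i=0}^{k-1} a_i\, x^{[i]} \]
is a linearized polynomial of $q$-degree at most $k-1$. Since the map sending $f$ to its coefficient tuple is a bijection onto the set of such polynomials, it suffices to show that every nonzero $f$ of $q$-degree at most $k-1$ produces a codeword of rank weight at least $n-k+1$; the minimum rank distance of the linear code then equals the minimum such weight.

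First I would recall that, after fixing an $\F_q$-basis of $\F_{q^m}$, the rank weight $\rk_q$ of a vector in $\F_{q^m}^n$ equals the $\F_q$-dimension of the $\F_q$-span of its entries. Because $f$ is $\F_q$-linear and $V:=\langle g_1,\dots,g_n\rangle_{\F_q}$ is $n$-dimensional (the $g_j$ being $\F_q$-linearly independent), the span of $f(g_1),\dots,f(g_n)$ is exactly the image $f(V)$. Applying rank–nullity to the $\F_q$-linear map $f|_V$ gives
\[ \rk_q\bigl(f(g_1),\dots,f(g_n)\bigr)=\dim_{\F_q} f(V)=n-\dim_{\F_q}\bigl(V\cap\ker f\bigr). \]

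The heart of the argument, and the step I expect to be the main obstacle, is the bound $\dim_{\F_q}\ker f\le k-1$. This rests on the structural property of linearized polynomials that the set of roots of a nonzero $q$-linearized polynomial of $q$-degree $d$ forms an $\F_q$-subspace of dimension at most $d$; applied to $f$ this yields $\dim_{\F_q}\ker f\le \qdeg(f)\le k-1$, whence $\dim_{\F_q}(V\cap\ker f)\le k-1$ and the displayed identity forces the rank weight to be at least $n-k+1$. Since $\dim_{\F_q}\ker f\le k-1<n$, the subspace $V\cap\ker f$ cannot be all of $V$, so a nonzero $f$ always produces a nonzero codeword; this simultaneously confirms that $M_k(g_1,\dots,g_n)$ has full rank $k$, so the code has dimension $k$, and that its minimum rank distance is at least $n-k+1$. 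Combining this lower bound with the Singleton bound $d\le n-k+1$ forces $d=n-k+1$, and hence the Gabidulin code is MRD.
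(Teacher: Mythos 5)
Your argument is correct and is essentially the classical proof from the cited reference: the paper itself states this theorem as a result of Gabidulin \cite{ga85a} without reproducing a proof, and the standard proof there is exactly your identification of codewords with evaluations of $q$-linearized polynomials of $q$-degree at most $k-1$, the identity $\rk_q(f(g_1),\dots,f(g_n)) = n - \dim_{\F_q}(V\cap\ker f)$, and the bound $\dim_{\F_q}\ker f\le k-1$ for a nonzero linearized polynomial. All steps check out, including the observation that the same kernel bound shows the Moore matrix has full rank $k$, so nothing is missing.
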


The dual code of a code $\mathcal{C}\subseteq \F_{q^{m}}^{n}$ is defined in the usual way as 
\[\mathcal{C}^{\perp} := \{u \in \F_{q^{m}}^{n} \mid \boldsymbol{u}\boldsymbol{c}^T=0 \quad \forall \boldsymbol{c}\in \mathcal{C}\}.  \]
In his seminal paper Gabidulin showed the following two results on dual codes of MRD codes:
\begin{proposition}\cite[Sections~2 and 4]{ga85a}\label{prop:dual1}
\begin{enumerate}
\item
Let $\mathcal{C}\subseteq \F_{q^{m}}^{n}$ be an MRD code of dimension $k$. Then the dual code $\mathcal{C}^{\perp}\subseteq \F_{q^{m}}^{n}$ is an MRD code of dimension $n-k$.
\item
Let $\mathcal{C}\subseteq \F_{q^{m}}^{n}$ be a Gabidulin code of dimension $k$. Then the dual code $\mathcal{C}^{\perp}\subseteq \F_{q^{m}}^{n}$ is a Gabidulin code of dimension $n-k$.
\end{enumerate}
\end{proposition}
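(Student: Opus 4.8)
The plan is to prove both parts by analyzing the dual code directly, leaning on the Singleton bound (already available) and the explicit Moore-matrix structure of Gabidulin codes from Definition \ref{def:Gab}. For part (1), I would first recall that for any linear code $\mathcal{C}\subseteq\F_{q^m}^n$ of dimension $k$, the dual $\mathcal{C}^\perp$ has dimension $n-k$; this is standard linear algebra over the field $\F_{q^m}$ and requires no rank-metric input. The work is therefore to show $\mathcal{C}^\perp$ has minimum rank distance exactly $k+1$, since by the Singleton bound a code of dimension $n-k$ satisfies $d\le n-(n-k)+1=k+1$, and equality is what makes it MRD. So it suffices to prove the lower bound $d_R(\mathcal{C}^\perp)\ge k+1$, i.e.\ that no nonzero dual codeword has rank $\le k$.

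The key step for this lower bound is a weight-distribution or counting argument. Since $\mathcal{C}$ is MRD of dimension $k$, its minimum distance is $n-k+1$, and the MacWilliams-type identities for the rank metric (due to Delsarte) determine the rank weight enumerator of an MRD code completely; in particular the number of codewords of each rank is forced. I would invoke the fact that an MRD code has a rigidly determined rank-weight distribution, and that the dual of a code whose weight distribution is that of an MRD code again has the MRD weight distribution, forcing $\mathcal{C}^\perp$ to have minimum rank distance $k+1$. Alternatively, and more self-containedly, one can argue by restriction: a rank-$r$ codeword of $\mathcal{C}^\perp$ corresponds to a column space of dimension $r$ over $\F_q$, and intersecting with the anticode of rank $n-r$ via the Singleton-bound proof gives the needed inequality. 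I expect this counting/duality step to be the main obstacle, since it is where the rank-metric structure (as opposed to generic linear algebra) genuinely enters.

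For part (2), the goal is sharper: I must exhibit $\mathcal{C}^\perp$ as an \emph{actual} Gabidulin code, i.e.\ produce linearly independent elements $h_1,\dots,h_n\in\F_{q^m}$ over $\F_q$ so that $M_{n-k}(h_1,\dots,h_n)$ is a parity-check matrix for $\mathcal{C}=\langle M_k(g_1,\dots,g_n)\rangle$. The natural approach is to seek $h_j$ such that the product relation $M_k(g)\,M_{n-k}(h)^T=0$ holds, which amounts to the trace-like conditions $\sum_{j} g_j^{[i]} h_j^{[\ell]}=0$ for $0\le i\le k-1$ and $0\le \ell\le n-k-1$. Using the Frobenius action $v\mapsto v^{[1]}=v^q$, these bilinear conditions can be rewritten as $\Big(\sum_j g_j^{[i-\ell]} h_j\Big)^{[\ell]}=0$, so it suffices to find $h=(h_1,\dots,h_n)$ annihilating the power-sums $\sum_j g_j^{[s]} h_j$ for the relevant range of shifts $s$. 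I would then invoke the standard theory of linearized (Ore) polynomials: the $g_j$ define a $q$-linearized polynomial whose adjoint/annihilator yields such an $h$, and one checks the resulting $h_j$ remain $\F_q$-linearly independent because the relevant Moore matrix is nonsingular. The delicate point here is verifying both that a valid $h$ with the required independence exists and that the index ranges match up so that all $k(n-k)$ orthogonality equations are satisfied simultaneously; this index bookkeeping, rather than any deep idea, is where I expect the proof to require the most care.
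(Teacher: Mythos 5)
The paper does not actually prove Proposition~\ref{prop:dual1}: both parts are cited from Gabidulin's original paper, and the only added content is the remark that Gabidulin exhibits a Moore-form parity check matrix for a Gabidulin code, which is exactly the fact your part~(2) sets out to establish. So there is nothing in the paper to match your argument against step by step; what you have written is a reasonable reconstruction of the cited proofs, and in outline it is the standard route. Two places deserve flagging, because they are where all the actual work lives. First, in part~(1) the claim that the dual of an MRD code has no nonzero word of rank $\le k$ is precisely the hard content of the proposition; you do not prove it but defer it to the rank-metric MacWilliams identities and the forced weight distribution of MRD codes (Delsarte's machinery, or equivalently Gabidulin's spectrum computation). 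That is a legitimate citation, but your ``more self-contained'' alternative via anticodes is too vague to count as a proof, so as written part~(1) is an appeal to known results rather than an argument. Second, in part~(2) your reduction of the orthogonality conditions $\sum_j g_j^{[i]}h_j^{[\ell]}=0$ to the vanishing of the shifted power sums $\sum_j g_j^{[s]}h_j$ for $n-1$ consecutive values of $s$ is correct and is exactly Gabidulin's construction; the remaining gap is the $\F_q$-linear independence of the resulting $h_1,\dots,h_n$. The clean way to close it is to observe that (after a Frobenius shift) $\boldsymbol{h}$ spans the dual of the $(n-1)$-dimensional Gabidulin code on $g_1,\dots,g_n$, which by part~(1) is a one-dimensional MRD code of minimum rank distance $n$, forcing the entries of $\boldsymbol{h}$ to be independent over $\F_q$; your phrase ``because the relevant Moore matrix is nonsingular'' gestures at this but does not pin it down, and note that it makes part~(2) depend on part~(1).
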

Note that the second result in Proposition \ref{prop:dual1} was not stated like this in \cite{ga85a}; Gabidulin showed however that the parity check matrix 
 of a Gabidulin code is of the form described in Definition \ref{def:Gab}, which implies the statement. 
For more information on bounds and constructions of rank-metric codes the interested reader is referred to \cite{ga85a}.

The results of Gabidulin (and Delsarte) were later on generalized by Kshevetskiy and Gabidulin in \cite{ks05} as follows.
 \begin{definition}
Let $g_1,\dots, g_n \in \F_{q^m}$ be linearly independent over $\F_q$ and $s\in \N$ such that $\gcd(s,m)=1$. We define a \emph{generalized Gabidulin code} $\mathcal{C}\subseteq \F_{q^m}^{n}$ as the linear block code with generator matrix
\[\left( \begin{array}{cccc}  g_1 & g_2 &\dots &g_n \\ g_1^{[s]} & g_2^{[s]} &\dots &g_n^{[s]} \\ &&\vdots \\  g_1^{[s(k-1)]} & g_2^{[s(k-1)]} &\dots &g_n^{[s(k-1)]} \end{array}\right)  .\]        
\end{definition}

\begin{theorem}\cite[Subsection IV.C]{ks05}
A generalized Gabidulin code $\mathcal{C}\subseteq \F_{q^m}^{n}$ of dimension $k$ over $\F_{q^m}$ has minimum rank distance 
 $n-k+1$. Thus generalized Gabidulin codes are MRD codes.
\end{theorem}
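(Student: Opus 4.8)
The plan is to show that a generalized Gabidulin code is itself a Gabidulin code with respect to a different Frobenius automorphism, and then invoke the already-established minimum-distance result for ordinary Gabidulin codes. The key observation is that the generator matrix uses powers $[s], [2s], \dots, [(k-1)s]$ of the Frobenius map, and since $\gcd(s,m)=1$, the map $x \mapsto x^{[s]} = x^{q^s}$ is itself a generator of the Galois group $\mathrm{Gal}(\F_{q^m}/\F_q)$. Writing $\sigma(x):=x^{q^s}$, the generator matrix is exactly the ``Moore-type'' matrix whose $i$-th row is obtained from the first by applying $\sigma^{i-1}$ entrywise. So the entire theory of ordinary Gabidulin codes should carry over verbatim once $q$ is replaced by the field automorphism $\sigma$.

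The first step is to set up this rephrasing carefully: I would note that $\F_{q^s}$-linear independence is what matters for the $q$-Frobenius, but here, because $\gcd(s,m)=1$, the powers $1, \sigma, \sigma^2, \dots, \sigma^{m-1}$ exhaust all $m$ distinct Frobenius powers $1, [1], \dots, [m-1]$ (in permuted order), so $\sigma$ has order $m$ over $\F_q$ and fixes exactly $\F_q$. Consequently, linear independence of $g_1,\dots,g_n$ over $\F_q$ is preserved and is the correct hypothesis. The second step is to bound the minimum distance from below. A nonzero codeword is a vector $(f(g_1),\dots,f(g_n))$ where $f = \sum_{j=0}^{k-1} a_j \sigma^j$ is a nonzero $\sigma$-linearized (skew) polynomial of $\sigma$-degree at most $k-1$. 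The rank of the corresponding matrix over $\F_q$ equals $n$ minus the dimension of the $\F_q$-space of $\F_q$-linear combinations of the $g_i$ lying in the kernel of $f$. Since $f$ is an $\F_q$-linear map on $\F_{q^m}$ that is a nonzero $\sigma$-polynomial of degree at most $k-1$, its kernel has $\F_q$-dimension at most $k-1$: this is the analogue, for the automorphism $\sigma$, of the classical fact that a nonzero $q$-linearized polynomial of $q$-degree $\le k-1$ has at most $q^{k-1}$ roots. Hence at most $k-1$ of an $\F_q$-independent set can lie in the kernel, giving rank at least $n-(k-1)=n-k+1$, so $d \ge n-k+1$.

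The one genuinely technical point, and the step I expect to be the main obstacle, is justifying the kernel-dimension bound for the automorphism $\sigma=q^s$ rather than the plain Frobenius. The clean way is to observe that $\{f(x)=\sum_{j=0}^{k-1}a_j x^{q^{sj}}\}$ is an ordinary $q$-linearized polynomial in disguise: its exponents are $q^{s\cdot 0}, q^{s\cdot 1},\dots$, which are genuine powers of $q$, so $f$ is a $q$-polynomial whose largest exponent is $q^{s(k-1)}$. A crude count would give at most $s(k-1)$ for the kernel dimension, which is too weak; the correct bound needs the sparsity of the support. I would instead argue directly that the $\F_q$-linear maps $1,\sigma,\dots,\sigma^{m-1}$ are linearly independent over $\F_{q^m}$ (Artin's theorem on independence of characters applied to the distinct automorphisms $\sigma^j$), so a nonzero $f=\sum_{j=0}^{k-1}a_j\sigma^j$ is a nonzero operator; its kernel, being the common solution space of a single nonzero $\sigma$-polynomial equation of degree $k-1$, has $\F_q$-dimension at most $k-1$ by the standard theory of skew polynomials over $\F_{q^m}$. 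Finally, the matching upper bound $d\le n-k+1$ is immediate from the Singleton bound, so the two inequalities pin down $d=n-k+1$ and the MRD property follows from the definition.
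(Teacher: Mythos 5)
The paper does not prove this theorem: it is quoted from Kshevetskiy--Gabidulin \cite{ks05} with only a citation, so there is no in-paper proof to compare yours against. Your argument is the standard one and it is correct. The step you rightly single out as the crux --- that a nonzero $\sigma$-polynomial $f=\sum_{j=0}^{k-1}a_j x^{[sj]}$ has kernel of $\F_q$-dimension at most $k-1$ inside $\F_{q^m}$ --- does go through, and it is precisely the root-counting argument the paper itself uses in the proof of Lemma \ref{lem:indep}: the root set of $f$ in a splitting field is an $\F_{q^s}$-subspace of $\F_{q^s}$-dimension at most $k-1$ (there are at most $q^{s(k-1)}$ roots, and the root set is closed under $\F_{q^s}$-combinations because $\lambda^{[s]}=\lambda$ for $\lambda\in\F_{q^s}$), while $\gcd(s,m)=1$ guarantees that an $\F_q$-independent subset of $\F_{q^m}$ stays $\F_{q^s}$-independent in $\F_{q^{ms}}$ (\cite[Lemma~4.3]{ks05}, invoked in Lemma \ref{lem:indep}), whence $\dim_{\F_q}(\ker f\cap\F_{q^m})\le k-1$. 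Be aware that Artin's independence of characters only certifies $f\neq 0$ as an operator; it does not by itself give the dimension bound, so the root-counting step is genuinely needed and your vaguer appeal to ``standard skew-polynomial theory'' should be replaced by it. Once that is in place, rank--nullity gives, for every nonzero codeword $\boldsymbol c=(f(g_1),\dots,f(g_n))$, that $\rk_q(\boldsymbol c)=n-\dim_{\F_q}\bigl(\ker f\cap\langle g_1,\dots,g_n\rangle_q\bigr)\ge n-k+1$, and the Singleton bound supplies the matching upper bound, so the proof is complete.
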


Similarly to the non-generalized case, Kshevetskiy and Gabidulin also showed the following.
\begin{proposition}\cite[Subsection IV.C]{ks05}\label{prop:dual2}
Let $\mathcal{C}\subseteq \F_{q^{m}}^{n}$ be a generalized Gabidulin code of dimension $k$. Then the dual code $\mathcal{C}^{\perp}\subseteq \F_{q^{m}}^{n}$ is a generalized Gabidulin code of dimension $n-k$.
\end{proposition}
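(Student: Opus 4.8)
The plan is to produce an explicit generator matrix for $\mathcal{C}^{\perp}$ of the exact shape demanded by the definition of a generalized Gabidulin code. By Proposition~\ref{prop:dual1}(1), applied to the MRD code $\mathcal{C}$, we already know that $\mathcal{C}^{\perp}$ is MRD of dimension $n-k$, so it suffices to exhibit elements $h_1,\dots,h_n\in\F_{q^m}$, linearly independent over $\F_q$, for which the $(n-k)\times n$ generalized Moore matrix $H:=M^{(s)}_{n-k}(h_1,\dots,h_n)$ satisfies $GH^{T}=0$, where $G$ is the given generator matrix of $\mathcal{C}$. Indeed, $\F_q$-independence of the $h_l$ forces $H$ to have rank $n-k$, so its row space is an $(n-k)$-dimensional subcode of $\mathcal{C}^{\perp}$, which by dimension must equal $\mathcal{C}^{\perp}$.

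First I would collapse the $k(n-k)$ scalar equations hidden in $GH^{T}=0$. The entry in position $(i,j)$ equals $\sum_{l=1}^{n} g_l^{[si]}h_l^{[sj]}$ for $0\le i\le k-1$ and $0\le j\le n-k-1$. Since $\gcd(s,m)=1$, the map $\theta\colon x\mapsto x^{[s]}$ generates $\mathrm{Gal}(\F_{q^m}/\F_q)$; applying the automorphism $\theta^{-j}$ to the $(i,j)$ equation turns it into $\sum_{l=1}^{n} g_l^{[s(i-j)]}h_l=0$. As $(i,j)$ ranges over the index set, the difference $i-j$ runs exactly through the $n-1$ consecutive integers $k-n+1,\dots,k-1$, so $GH^{T}=0$ is equivalent to the single window of scalar conditions $\sum_{l=1}^{n} g_l^{[sr]}h_l=0$ for $k-n+1\le r\le k-1$.

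To build such an $h$ I would first treat the full-length case $n=m$: here $g_1,\dots,g_m$ is an $\F_q$-basis of $\F_{q^m}$ and the square matrix $\Gamma:=M^{(s)}_m(g_1,\dots,g_m)$ is invertible, since its (generalized Moore) determinant vanishes exactly when the arguments are $\F_q$-dependent. Solving $\Gamma\,h^{T}=e_1$ for the first standard basis vector produces an $h$ with $\sum_{l} g_l^{[sd]}h_l=\delta_{d\equiv 0\,(\mathrm{mod}\,m)}$ for every $d$, whence a short computation yields $\Gamma\,M^{(s)}_m(h)^{T}=I$; thus $M^{(s)}_m(h)$ is invertible, the $h_l$ are automatically $\F_q$-independent, and the off-diagonal entries supply precisely the orthogonality needed. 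For $n<m$ I would extend $g_1,\dots,g_n$ to an $\F_q$-basis and view $\mathcal{C}$ as the length-$m$ generalized Gabidulin code punctured on its last $m-n$ coordinates. Writing the length-$m$ dual as the row space of $M^{(s)}_{m-k}(h'_1,\dots,h'_m)$, the standard identity ``(dual of a punctured code) $=$ (shortening of the dual)'' reduces the claim to showing that shortening a generalized Gabidulin code on a coordinate set is again generalized Gabidulin.

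The hard part will be this last reduction, namely keeping the parity-check entries $\F_q$-independent after shortening, whereas in the square case independence came for free from the invertibility of $\Gamma$. In the language of $\theta$-linearized polynomials, shortening on $\{n+1,\dots,m\}$ keeps the evaluation vectors of those $f$ of $\theta$-degree $<m-k$ that vanish on $U:=\langle h'_{n+1},\dots,h'_m\rangle_{\F_q}$, i.e.\ of $f=L\circ M_U$ with $M_U$ the minimal $\theta$-polynomial of $U$; the shortened code is then generated by $M^{(s)}_{n-k}(M_U(h'_1),\dots,M_U(h'_n))$. The remaining point is that $M_U(h'_1),\dots,M_U(h'_n)$ are $\F_q$-independent, which holds because $U=\ker M_U$ and $h'_1,\dots,h'_n$ are independent modulo $U$. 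As a shortcut I note that this whole argument is the verbatim analogue of the proof of Proposition~\ref{prop:dual1}(2) with the Frobenius $x\mapsto x^{q}$ replaced by $\theta\colon x\mapsto x^{q^{s}}$, the hypothesis $\gcd(s,m)=1$ being exactly what makes $\theta$ a generator of the Galois group, so every property of the Frobenius used classically carries over unchanged.
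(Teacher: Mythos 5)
The paper states Proposition~\ref{prop:dual2} without proof, importing it from Kshevetskiy--Gabidulin; the intended argument (indicated for $s=1$ in the remark following Proposition~\ref{prop:dual1}) is exactly the one you give, namely exhibiting a parity-check matrix in generalized Moore form with $\F_q$-independent entries. Your reconstruction is correct and follows that same route: the collapse of $GH^{T}=0$ to the window $\sum_{l}g_l^{[sr]}h_l=0$, the square case via $\Gamma h^{T}=e_1$ (where the negative exponents are absorbed using $x^{[sd]}=x^{[s(d+m)]}$ on $\F_{q^m}$, and the invertibility of $\Gamma$ is precisely Lemma~\ref{lem:indep}), and the puncturing/shortening reduction with the factorization $f=L\circ M_U$ are all sound, with $\gcd(s,m)=1$ entering exactly where you say it does.
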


Denote by $\mathrm{Gal}(\F_{q^m}/\F_q)$ the \emph{Galois group} of $\F_{q^m}$, i.e.\ the automorphisms of $\F_{q^m}$ that fix the base field $\F_q$. It is well-known that $\mathrm{Gal}(\F_{q^m}/\F_q)$ is generated by the \emph{Frobenius map}, which takes an element to its $q$-th power. Hence the automorphisms are of the form $x\mapsto x^{[i]}$ for some $0\leq i \leq m$. We will denote the respective inverse map, i.e.\ the $[i]$-th root, by $x\mapsto x^{[-i]}$. 

The (semi-)linear rank isometries on $\F_{q^m}^{n}$ are induced by the isometries on $\F_q^{m\times n}$ and are hence well-known, see e.g.\ \cite{be03,mo14,wa96}:
\begin{lemma}\cite[Proposition~2]{mo14}\label{isometries}
The semilinear $\F_q$-rank isometries on $\F_{q^m}^{n}$ are of the form
\[(\lambda, A, \sigma) \in \left( \F_{q^m}^* \times  \GL_n(q) \right) \rtimes \mathrm{Gal}(\F_{q^m}/\F_q) ,\]
acting on $ \F_{q^m}^n \ni (v_1,\dots,v_n)$ via
\[(v_1,\dots,v_n) (\lambda, A, \sigma) = (\sigma(\lambda v_1),\dots,\sigma(\lambda v_n)) A .\]
In particular, if $\mathcal{C}\subseteq \F_{q^m}^n$ is a linear code with minimum rank distance $d$, then 
\[\mathcal{C}' = \sigma(\lambda \mathcal{C}) A \]
is a linear code with minimum rank distance $d$.
\end{lemma}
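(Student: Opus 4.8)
\emph{Overview.} The statement splits into a sufficiency half and the classification half (``are of the form''), and the substance is the latter. For sufficiency I would simply observe that multiplication by a fixed $\lambda\in\F_{q^m}^*$ is an $\F_q$-linear bijection of $\F_{q^m}\cong\F_q^m$, that the Frobenius power $\sigma$ is an $\F_q$-linear bijection of $\F_q^m$ (it is additive and fixes $\F_q$), and that right multiplication by $A\in\GL_n(q)$ is an $\F_q$-linear bijection of $\F_q^{m\times n}$; composing these through the matrix representation shows $(v_1,\dots,v_n)\mapsto(\sigma(\lambda v_1),\dots,\sigma(\lambda v_n))A$ preserves $\rk_q$ and is additive, while $\phi(\mu v)=\sigma(\mu)\phi(v)$ is immediate. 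This already yields the ``in particular'' consequence, as $\mathcal C'=\sigma(\lambda\mathcal C)A$ is the image of $\mathcal C$ under such a map, and a semilinear bijection carries a linear code to a linear code of the same minimum rank distance. I would then spend the rest of the argument on the converse.

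\emph{Reduction to an intertwining relation.} Let $\phi$ be a semilinear $\F_q$-rank isometry with automorphism $\sigma\in\mathrm{Gal}(\F_{q^m}/\F_q)$, read as a map on $\F_q^{m\times n}$. Since $\phi$ is an additive bijection with $\phi(0)=0$ preserving $\rk_q$ in both directions, it sends rank-one differences to rank-one differences both ways, so by the fundamental theorem of the geometry of rectangular matrices (Hua) it has the shape $\phi(X)=P\,X^{\tau}Q$ with $P\in\GL_m(q)$, $Q\in\GL_n(q)$ and $\tau\in\mathrm{Aut}(\F_q)$ applied entrywise, together with the extra transpose alternative $\phi(X)=P(X^{T})^{\tau}Q$ when $m=n$. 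The $\F_q$-linearity of $\phi$ (the case $\mu\in\F_q$ of semilinearity, where $\sigma(\mu)=\mu$) forces $\tau(\mu)=\mu$ for every $\mu\in\F_q$, hence $\tau=\mathrm{id}$ and $\phi(X)=PXQ$ (or $PX^{T}Q$).

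\emph{Collapsing to the stated form.} I would write scalar multiplication by $\mu\in\F_{q^m}$ on $\F_q^{m\times n}\cong\F_{q^m}^n$ as left multiplication by the regular-representation matrix $M_\mu\in\F_q^{m\times m}$, and let $M_\sigma\in\GL_m(q)$ be the matrix of $\sigma$ as an $\F_q$-linear map of $\F_q^m$; the key identity is $M_\sigma M_\mu=M_{\sigma(\mu)}M_\sigma$, coming from $\sigma(\mu x)=\sigma(\mu)\sigma(x)$. Imposing full semilinearity $\phi(M_\mu X)=M_{\sigma(\mu)}\phi(X)$ gives, in the non-transpose case, $PM_\mu=M_{\sigma(\mu)}P$ for all $\mu$; substituting $M_{\sigma(\mu)}=M_\sigma M_\mu M_\sigma^{-1}$ yields $(M_\sigma^{-1}P)M_\mu=M_\mu(M_\sigma^{-1}P)$, so $M_\sigma^{-1}P$ centralizes the whole regular representation $\{M_\mu:\mu\in\F_{q^m}\}$. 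As that representation is a maximal subfield of $\F_q^{m\times m}$ it is self-centralizing, so $M_\sigma^{-1}P=M_\lambda$ for some $\lambda\in\F_{q^m}^*$, i.e.\ $P=M_\sigma M_\lambda$; reading $\phi(X)=M_\sigma M_\lambda X Q$ back in $\F_{q^m}^n$-coordinates is exactly $(v_1,\dots,v_n)\mapsto(\sigma(\lambda v_1),\dots,\sigma(\lambda v_n))A$ with $A:=Q$. The transpose alternative is excluded by the same computation: $\phi(X)=PX^{T}Q$ forces $PYM_\mu^{T}=M_{\sigma(\mu)}PY$ for all $Y$, which requires $M_\mu^{T}$ to be scalar for every $\mu$, impossible once $m\ge2$ and $\mu\notin\F_q$.

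\emph{Main obstacle and loose ends.} The heart of the argument is the first reduction, ``rank isometry $\Rightarrow PX^\tau Q$'', that is, Hua's fundamental theorem. If a self-contained proof is wanted instead of a citation, this is precisely the hard part: one shows that $\phi$ induces a collineation of the projective geometry whose points are the rank-one lines (the $\F_{q^m}$-spans of vectors with entries in $\F_q$, in bijection with the points of the projective space over $\F_q$) and then invokes the fundamental theorem of projective geometry. Everything after that is mechanical linear algebra: forcing $\tau=\mathrm{id}$, ruling out the transpose, and identifying the centralizer of the regular representation. Finally the degenerate parameters $n=1$ and $m=1$, where Hua does not apply, I would treat directly, since there every nonzero vector has rank one and an additive semilinear bijection of the relevant space is visibly of the stated form.
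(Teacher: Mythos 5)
Your proposal is correct and follows essentially the same route as the paper's source: the paper does not prove this lemma itself but imports it from Morrison \cite{mo14} (see also \cite{wa96,be03}, to which the paper points with the remark that the semilinear isometries are ``induced by the isometries on $\F_q^{m\times n}$''), and those references argue exactly as you do --- reduce to the classification of rank isometries of $\F_q^{m\times n}$ via Hua's fundamental theorem of the geometry of rectangular matrices, force $\tau=\mathrm{id}$ by $\F_q$-linearity, rule out the transpose case for $m=n$, and identify the $\F_{q^m}$-semilinear maps through the self-centralizing regular representation $\{M_\mu\}\cong\F_{q^m}$ inside $\F_q^{m\times m}$. Your intertwining computation $P=M_\sigma M_\lambda$, the exclusion of $PX^TQ$, the degenerate cases $m=1$ and $n=1$, and the derivation of the ``in particular'' clause are all sound.
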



We denote by $\GL_n(q):=\{A\in \F_q^{n\times n} \mid \rk (A) =n\}$ the general linear group of degree $n$ over $\F_q$. 
One can easily check that $\F_q$-linearly independent elements in $\F_{q^m}$ remain $\F_q$-linearly independent under the actions of $\F_{q^m}^*, \GL_n(q)$ and $\mathrm{Gal}(\F_{q^m}/\F_q)$. Moreover, the Moore matrix structure is preserved under these actions, which implies that the class of Gabidulin codes is closed under the semilinear isometries.

In this work we want to classify MRD codes and which of them are generalized Gabidulin codes. For this we will derive some criteria for both the MRD and the Gabidulin property. 
The following criterion for MRD codes was already given in \cite{ga85a}:
\begin{proposition}\label{lem3}
 Let $H\in \F_{q^m}^{(n-k)\times n}$ be a parity check matrix of a rank-metric code $\mathcal{C}\subseteq \F_{q^m}^n$. Then $\mathcal{C}$ is an MRD code if and only if 
$$ \rk_{q^m}(VH^T) =n-k$$
for all $V\in \F_q^{(n-k)\times n}$ with $\rk_{q}(V)=n-k$.
\end{proposition}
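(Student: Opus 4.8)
The plan is to characterise, via the parity check matrix $H$, exactly when $\mathcal{C}$ possesses a nonzero codeword of rank weight at most $n-k$. Since the Singleton bound already gives minimum rank distance $\le n-k+1$, the code is MRD precisely when no such codeword exists, and this should match the stated full-rank condition on $VH^T$ by contraposition.

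First I would recall the factorisation description of rank weight. Fixing an $\F_q$-basis of $\F_{q^m}$ and identifying $\boldsymbol{c}\in\F_{q^m}^n$ with its matrix $X\in\F_q^{m\times n}$, a rank factorisation $X=PQ$ with $P\in\F_q^{m\times r}$ and $Q\in\F_q^{r\times n}$ both of rank $r$ translates into $\boldsymbol{c}=\boldsymbol{y}Q$, where the entries of $\boldsymbol{y}\in\F_{q^m}^r$ are the $\F_q$-linearly independent field elements read off from $P$. Hence $\rk_q(\boldsymbol{c})\le r$ if and only if $\boldsymbol{c}=\boldsymbol{y}Q$ for some $\boldsymbol{y}\in\F_{q^m}^r$ and some $Q\in\F_q^{r\times n}$. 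I would then note that whenever $\rk_q(\boldsymbol{c})=r\le n-k$, one may pad $Q$ with further $\F_q$-independent rows to a full-rank matrix $V\in\F_q^{(n-k)\times n}$ and pad $\boldsymbol{y}$ with zeros, so that $\boldsymbol{c}=\boldsymbol{u}V$ with $\rk_q(V)=n-k$ and $\boldsymbol{u}\in\F_{q^m}^{n-k}$ nonzero.

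Next I would translate membership in $\mathcal{C}$ into a rank deficiency of $VH^T$. Since $\boldsymbol{c}\in\mathcal{C}$ iff $\boldsymbol{c}H^T=0$, for $\boldsymbol{c}=\boldsymbol{u}V$ this reads $\boldsymbol{u}(VH^T)=0$, i.e.\ $VH^T$ has a nontrivial left kernel over $\F_{q^m}$, equivalently $\rk_{q^m}(VH^T)<n-k$. For the converse I would use the fact stressed in the preliminaries that $\F_q$-linearly independent vectors remain $\F_{q^m}$-linearly independent: if $\rk_q(V)=n-k$ and $\boldsymbol{u}\neq 0$, then $\boldsymbol{u}V\neq 0$, so any $V$ of full rank $n-k$ with $\rk_{q^m}(VH^T)<n-k$ produces, via a nonzero $\boldsymbol{u}$ in its left kernel, a nonzero codeword $\boldsymbol{u}V$ of rank weight at most $n-k$.

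Combining the two directions gives the equivalence: $\mathcal{C}$ contains a nonzero codeword of rank weight $\le n-k$ if and only if there is a $V\in\F_q^{(n-k)\times n}$ with $\rk_q(V)=n-k$ and $\rk_{q^m}(VH^T)<n-k$. Taking the contrapositive and recalling that $\rk_{q^m}(VH^T)\le n-k$ always holds yields exactly the claimed criterion. I expect the main obstacle to be the bookkeeping in the factorisation step — in particular justifying the padding argument that upgrades an arbitrary rank-$r$ factor to a full-rank $V$ without changing the codeword, and keeping careful track of transposes so that the left kernel of $VH^T$ corresponds correctly to codewords of the form $\boldsymbol{u}V$.
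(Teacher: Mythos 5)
Your argument is correct. Note, however, that the paper itself offers no proof of this proposition: it is quoted directly from Gabidulin's 1985 paper, so there is nothing internal to compare against. Your write-up is essentially the classical argument behind that citation: the factorisation $\boldsymbol{c}=\boldsymbol{u}V$ with $V\in\F_q^{(n-k)\times n}$ characterises codewords of rank weight at most $n-k$, membership in $\mathcal{C}$ turns this into a nontrivial left kernel of the square matrix $VH^T$ over $\F_{q^m}$, and the contrapositive (together with the Singleton bound) gives the stated criterion; the padding step and the fact that a full-rank matrix over $\F_q$ stays full rank over $\F_{q^m}$ are both sound, so there is no gap.
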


This criterion is formulated with respect to the parity check matrix of a linear code. We can easily derive a criterion for the generator matrix of MRD codes from this:
\begin{corollary}\label{cor3}
  Let $G\in \F_{q^m}^{k\times n}$ be a generator matrix of a rank-metric code $\mathcal{C}\subseteq \F_{q^m}^n$. Then $\mathcal{C}$ is an MRD code if and only if 
$$ \rk_{q^m}(VG^T) =k$$
for all $V\in \F_q^{k\times n}$ with $\rk_{q}(V)=k$.
\end{corollary}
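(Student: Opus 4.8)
The plan is to deduce the generator-matrix criterion directly from the parity-check criterion of Proposition \ref{lem3}, using the duality established in Proposition \ref{prop:dual1}. The key observation is that $G \in \F_{q^m}^{k\times n}$ is a generator matrix of $\mathcal{C}$ if and only if $G$ is a parity check matrix of the dual code $\mathcal{C}^\perp$, which is a code of dimension $n-k$. Since $\mathcal{C}$ is MRD if and only if $\mathcal{C}^\perp$ is MRD (by Proposition \ref{prop:dual1}, part~1), I would apply Proposition \ref{lem3} to $\mathcal{C}^\perp$ with parity check matrix $G$.

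Concretely, first I would note that for a code $\mathcal{C}$ of dimension $k$ with generator matrix $G$, the dual $\mathcal{C}^\perp$ has dimension $n-k$ and admits $G$ as a parity check matrix (this is the standard relationship between generator and parity check matrices, and holds over $\F_{q^m}$ exactly as in the classical linear-algebra setting). Then Proposition \ref{lem3}, applied to $\mathcal{C}^\perp$ with its parity check matrix $H' = G$ (which has $n - (n-k) = k$ rows), states that $\mathcal{C}^\perp$ is MRD if and only if $\rk_{q^m}(V G^T) = k$ for all $V \in \F_q^{k\times n}$ with $\rk_q(V) = k$. This is exactly the displayed condition in the statement.

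Finally I would close the argument using the equivalence ``$\mathcal{C}$ is MRD $\iff$ $\mathcal{C}^\perp$ is MRD'' from Proposition \ref{prop:dual1}. Chaining the two equivalences yields: $\mathcal{C}$ is MRD $\iff$ $\mathcal{C}^\perp$ is MRD $\iff$ $\rk_{q^m}(V G^T) = k$ for all full-rank $V \in \F_q^{k\times n}$, which is the claim.

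I do not expect a genuine obstacle here, since the result is a formal consequence of Proposition \ref{lem3} and Proposition \ref{prop:dual1}. The only point requiring a small amount of care is verifying that the dimension bookkeeping matches, namely that the number of rows of $G$ viewed as a parity check matrix is $k$ and that the required rank in the criterion is correspondingly $k$ rather than $n-k$; this is immediate once the roles of $\mathcal{C}$ and $\mathcal{C}^\perp$ are fixed. An alternative, more self-contained route would bypass duality entirely and argue directly that $\rk_q(uG - wG) = \rk_q((u-w)G)$ is controlled by the minimum rank weight of the nonzero codewords, translating the MRD (Singleton-attaining) condition into the stated full-rank condition on $VG^T$; but the duality argument is shorter and reuses machinery already available in the excerpt.
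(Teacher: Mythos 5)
Your proposal is correct and follows exactly the paper's own argument: view $G$ as a parity check matrix of $\mathcal{C}^\perp$ (dimension $n-k$), apply Proposition \ref{lem3} to $\mathcal{C}^\perp$, and conclude via the MRD duality of Proposition \ref{prop:dual1}. The dimension bookkeeping you flag is handled identically in the paper, so nothing further is needed.
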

\begin{proof}
 The generator matrix $G$ of $\mathcal{C}$ is a parity check matrix of the dual code $\mathcal{C}^\perp \subseteq \F_{q^m}^n$ of dimension $n-k$. It follows from Proposition \ref{lem3} that $\mathcal{C}^\perp$ is an MRD code if and only if 
$ \rk_{q^m}(VG^T) =k$
for all $V\in \F_q^{k\times n}$ with $\rk_{q}(V)=k$. Since $\mathcal{C}$ is MRD if and only if $\mathcal{C}^\perp$ is MRD (see Proposition \ref{prop:dual1}), the statement follows.
\end{proof}

Throughout the paper $I_k$ denotes the identity matrix of size $k$. Furthermore, $\langle v_1, \dots, v_n \rangle_{q}$ denotes the $\F_q$-vector space generated by $v_1, \dots, v_n$.


\section{New Criterion for MRD Codes}\label{sec:criterionMRD}

In this section we give a new criterion to check if a given generator matrix $G$ generates an MRD code. 
 The criterion is stated in Theorem \ref{thm:mainMRD}. Before we can state the main theorem we need the following lemma.


\begin{lemma}\label{lem:helpMRD}
 Any generator matrix $G\in \F_{q^m}^{k\times n}$ of an MRD code $\mathcal{C}\subseteq \F_{q^m}^n$ of dimension $k$ has only non-zero maximal minors. 
\end{lemma}
\begin{proof}
Let $V=[\: I_{k} \mid 0_{k\times (n-k)} \: ] \in \F_{q}^{k\times n}$. Then $\det(VG^{T})$ is the maximal minor of $G$ involving the first $k$ columns. By Corollary \ref{cor3} this minor is non-zero. Similarly we can create all other maximal minors of $G$ by multiplication with some  $ V\in \F_{q}^{k\times n}$ on the left, which implies, by Corollary \ref{cor3}, the statement.
\end{proof}

We can now state the new MRD criterion:

\begin{theorem}\label{thm:mainMRD}
 Let $G\in \F_{q^m}^{k\times n}$ be a generator matrix of a rank-metric code $\mathcal{C}\subseteq \F_{q^m}^n$. Then $\mathcal{C}$ is an MRD code if an only if for any $A\in \GL_{n}(q)$, every maximal minor of $GA$ is non-zero.  
\end{theorem}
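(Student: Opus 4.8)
The plan is to prove both implications by reducing the maximal-minor condition to the determinantal reformulation of Corollary \ref{cor3}. The bridge between the two formulations is the observation that $V G^T$ and $G V^T$ are transposes of one another, so $\det(V G^T) = \det(G V^T)$, together with the elementary fact that a matrix $V \in \F_q^{k\times n}$ of rank $k$ is precisely one whose transpose $V^T \in \F_q^{n\times k}$ has full column rank, hence can be completed to an invertible matrix over $\F_q$. These two facts let me translate freely between ``full-rank $V$'' in Corollary \ref{cor3} and ``columns of some $A\in\GL_n(q)$''.

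For the forward direction, I would fix an arbitrary $A\in\GL_n(q)$ and use the isometry Lemma \ref{isometries} with $\lambda=1$ and $\sigma=\mathrm{id}$: the right multiplication by $A$ sends $\mathcal{C}$ to the linear code $\mathcal{C}A$, which has the same minimum rank distance and is therefore again MRD. Since $GA$ is a generator matrix of $\mathcal{C}A$, Lemma \ref{lem:helpMRD} immediately yields that every maximal minor of $GA$ is non-zero. As $A$ was arbitrary, this gives one implication with essentially no computation.

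For the converse, I would invoke Corollary \ref{cor3} and show $\rk_{q^m}(VG^T)=k$ for every $V\in\F_q^{k\times n}$ with $\rk_q(V)=k$. Because $VG^T$ is a $k\times k$ matrix, this is equivalent to $\det(VG^T)\neq 0$, and by the transpose identity above it suffices to show $\det(GV^T)\neq 0$. Here $V^T\in\F_q^{n\times k}$ has full column rank, so I would extend its $k$ columns to a basis of $\F_q^n$, obtaining $B=[\,V^T \mid W\,]\in\GL_n(q)$. Then $GB=[\,GV^T \mid GW\,]$, and $\det(GV^T)$ is exactly the maximal minor of $GB$ formed by its first $k$ columns. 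By hypothesis applied to $A=B$, this minor is non-zero, whence $\rk_{q^m}(VG^T)=k$; Corollary \ref{cor3} then gives that $\mathcal{C}$ is MRD.

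Neither direction hides a deep difficulty; the proof is short once the right reformulation is in place. The step that requires the most care, and which I view as the conceptual crux, is the backward direction's identification of $\det(GV^T)$ with a specific maximal minor of $GB$ for a well-chosen $B\in\GL_n(q)$. In particular one should note that although the theorem quantifies over \emph{all} maximal minors of $GA$ for \emph{all} $A$, the converse only consumes, for each fixed $V$, the single leading minor of one matrix $GB$; the full strength of the hypothesis is what distinguishes this genuine criterion from the merely necessary condition of Lemma \ref{lem:helpMRD}, which corresponds to taking $A=I_n$ alone.
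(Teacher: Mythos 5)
Your proof is correct, and while your forward direction coincides with the paper's (both pass through Lemma \ref{isometries} to see that $\mathcal{C}A$ is again MRD and then apply Lemma \ref{lem:helpMRD} to its generator matrix $GA$), your converse takes a genuinely different route. The paper argues by contraposition: if $\mathcal{C}$ is not MRD it contains a non-zero codeword $\boldsymbol{c}$ of rank at most $n-k$, one chooses $A\in\GL_n(q)$ so that $\boldsymbol{c}A$ vanishes in its first $k$ coordinates, and then some generator matrix of $\mathcal{C}A$ containing $\boldsymbol{c}A$ as a row has a zero leading minor (implicitly using that passing between generator matrices of the same code only rescales each maximal minor by a non-zero factor, so the conclusion transfers to $GA$ itself). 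You instead argue directly: for each full-rank $V\in\F_q^{k\times n}$ you identify $\det(VG^T)=\det(GV^T)$ with the leading maximal minor of $GB$ for $B=[\,V^T\mid W\,]\in\GL_n(q)$ obtained by completing the columns of $V^T$ to a basis of $\F_q^n$, and then invoke Corollary \ref{cor3}. Your version buys a cleaner bookkeeping of which generator matrix is being used (avoiding the change-of-basis subtlety the paper leaves tacit) and makes transparent exactly how much of the hypothesis is consumed; the paper's version buys a more geometric picture, exhibiting the failing minor as coming directly from a low-rank codeword. Both are complete and correct.
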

\begin{proof}
We first prove the \emph{only if} direction. For this let $\mathcal{C}$ be MRD. Then we know from Lemma \ref{isometries} that all elements on the orbit of $\mathcal{C}$ under $\GL_n(q)$ are MRD. Since $\GL_n(q)$ acts on the columns of any generator matrix of $\mathcal{C}$, together with Lemma \ref{lem:helpMRD}, we get that all maximal minors of any orbit element must be non-zero.

For the other direction, let $\mathcal{C}$ be non-MRD, i.e.\ there exists a non-zero codeword $\boldsymbol{c} \in \mathcal{C}$ of rank at most $n-k$. Then there exists $A\in \GL_n(q)$ s.t.\ 
\[cA = ( \underbrace{0 \dots 0}_{k} \mid \underbrace{* \dots *}_{n-k} ) .\]
This in turn implies that there exists a generator matrix of $\mathcal{C}A$ with $cA$ as a row. Thus the first maximal minor of this generator matrix will be zero. 
\end{proof}

We can slightly simplify this criterion as follows. For this denote by $\mathrm{UT}^*_{n}(q)$ the subgroup of $\GL_{n}(q)$ of upper triangular matrices with an all-$1$ diagonal.

\begin{corollary}\label{cor:mainMRD}
 Let $G\in \F_{q^m}^{k\times n}$ be a generator matrix of a rank-metric code $\mathcal{C}\subseteq \F_{q^m}^n$. Then $\mathcal{C}$ is an MRD code if and only if for any $A\in \mathrm{UT}^*_{n}(q)$ every maximal minor of $GA$ is non-zero.  
\end{corollary}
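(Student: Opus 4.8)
The goal is to refine Theorem~\ref{thm:mainMRD} by replacing the test group $\GL_n(q)$ with the much smaller subgroup $\mathrm{UT}^*_n(q)$ of unipotent upper triangular matrices. The plan is to prove the nontrivial \emph{if} direction by contraposition, mirroring the structure of the proof of Theorem~\ref{thm:mainMRD}: I assume $\mathcal{C}$ is non-MRD and produce a witness $A$ that can be taken inside $\mathrm{UT}^*_n(q)$, forcing some maximal minor of $GA$ to vanish. The \emph{only if} direction is immediate, since $\mathrm{UT}^*_n(q)\subseteq\GL_n(q)$ and Theorem~\ref{thm:mainMRD} already guarantees non-vanishing of maximal minors for \emph{all} elements of $\GL_n(q)$.

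For the \emph{if} direction, suppose $\mathcal{C}$ is not MRD. As in the proof of Theorem~\ref{thm:mainMRD}, there is a nonzero codeword $\boldsymbol{c}$ of rank at most $n-k$, meaning its $\F_q$-support has dimension at most $n-k$; equivalently, the $\F_q$-span of the coordinates of $\boldsymbol{c}$ lies in a subspace of dimension at most $n-k$. Using the coordinate matrix of $\boldsymbol{c}$ over $\F_q$, I would first produce \emph{some} $A\in\GL_n(q)$ with $\boldsymbol{c}A$ having its first $k$ entries equal to zero, exactly as before. The key additional step is then to show this column operation can be realized by a matrix in $\mathrm{UT}^*_n(q)$. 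The cleanest route is a factorization argument: any $A\in\GL_n(q)$ admits a Bruhat-type or $\mathrm{LU}$-style decomposition, and more to the point, the condition ``some maximal minor of $GB$ is zero'' is invariant under right-multiplying $B$ by elements that only rescale and permute columns. I would argue that the subgroup generated by permutations, diagonal scalings, and $\mathrm{UT}^*_n(q)$ is all of $\GL_n(q)$, and that permutations and diagonal scalings affect each maximal minor only up to a nonzero scalar factor, hence preserve the property of having \emph{a} vanishing maximal minor.

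Concretely, I expect the argument to run as follows. Write the chosen witness as $A = P D U$ with $P$ a permutation matrix, $D$ a nonsingular diagonal matrix, and $U\in\mathrm{UT}^*_n(q)$; such a decomposition is available for matrices in generic position, and by pre-composing $G$ with a suitable permutation one reduces to this case. Since every maximal minor of $G(PD)U$ equals, up to the nonzero determinant factors contributed by the relevant columns of $P$ and $D$, a maximal minor of $(G')U$ where $G' = G(PD)$ is another generator matrix of an isometric (hence still non-MRD) code, the vanishing of a maximal minor is transported back to a vanishing maximal minor under an element of $\mathrm{UT}^*_n(q)$ alone. The upshot is that the permutation and scaling parts can be absorbed into a re-choice of generator matrix, leaving only the unipotent part to be tested.

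The main obstacle is the reduction step that legitimately confines the witness $A$ to $\mathrm{UT}^*_n(q)$: not every $A\in\GL_n(q)$ is itself upper triangular unipotent, so one must argue carefully that the permutation and diagonal factors are harmless. The subtlety is that a permutation can move which $k$ columns are being zeroed out, so ``the first maximal minor vanishes'' is not literally preserved — only the weaker ``\emph{some} maximal minor vanishes'' is, and one must verify this weaker statement is exactly what the corollary asserts. I would handle this by choosing the permutation so that the $k$ columns carrying the support of $\boldsymbol{c}A$ become the designated ones, and by checking that rescaling columns multiplies each maximal minor by a nonzero constant, hence cannot turn a zero minor into a nonzero one or vice versa. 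Once this bookkeeping is in place, the corollary follows directly from Theorem~\ref{thm:mainMRD}.
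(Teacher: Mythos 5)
Your overall strategy (contraposition plus a reduction from $\GL_n(q)$ to $\mathrm{UT}^*_n(q)$) is in the spirit of the paper's own proof, which simply notes that $\mathrm{UT}^*_n(q)$ together with the diagonal and permutation matrices generates $\GL_n(q)$ and that the latter two only rescale and permute the maximal minors. But the specific reduction you propose has a genuine gap, for two reasons. First, the order of your factorization is wrong for the purpose at hand: writing $A=PDU$ gives $GA=(GPD)U$, so the vanishing minor you produce is a maximal minor of $G'U$ with $G'=GPD$, i.e.\ a statement about a \emph{different} generator matrix (of the isometric code $\mathcal{C}PD$), whereas the corollary demands a vanishing maximal minor of $GU'$ for the \emph{original} $G$ and some $U'\in\mathrm{UT}^*_n(q)$. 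To absorb the monomial part harmlessly you would need it on the right, i.e.\ a factorization $A=U'PD$. Second, such a factorization does not exist for all $A$: the set $\mathrm{UT}^*_n(q)\cdot\{\text{monomial matrices}\}$ has at most $q^{n(n-1)/2}\,n!\,(q-1)^n$ elements, strictly fewer than $|\GL_n(q)|=q^{n(n-1)/2}\prod_{i=1}^n(q^i-1)$ for $n\geq 2$; already in $\GL_2(2)$ the matrix $\left(\begin{smallmatrix}1&0\\1&1\end{smallmatrix}\right)$ is not of the form (unipotent upper triangular)$\cdot$(permutation). So the appeal to a ``Bruhat-type or LU-style decomposition'' in ``generic position'' fails precisely because the witness $A$ is dictated by the low-rank codeword and cannot be assumed generic.

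The missing idea is to work at the level of column spaces rather than whole matrices. Whether $\det\bigl(G\,A[:,S]\bigr)$ vanishes depends only on the $\F_q$-span of the columns $A[:,S]$, and \emph{every} $k$-dimensional subspace of $\F_q^n$ arises as the span of $k$ columns of some $U\in\mathrm{UT}^*_n(q)$: put the subspace in reduced column echelon form with pivot rows $j_1<\dots<j_k$, place those basis vectors in columns $j_1,\dots,j_k$ of $U$, and fill the remaining columns with the corresponding standard basis vectors. With this observation your contrapositive goes through: a codeword $\boldsymbol{c}=\boldsymbol{x}G$ of rank at most $n-k$ has a kernel $\{v\in\F_q^n : \boldsymbol{c}v^T=0\}$ of dimension at least $k$; realizing a $k$-dimensional subspace of it as columns $S$ of some $U\in\mathrm{UT}^*_n(q)$ forces $\boldsymbol{x}\,(GU)[:,S]=0$, so the maximal minor of $GU$ indexed by $S$ vanishes. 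As written, though, your proof breaks down at exactly the step you yourself flagged as the main obstacle.
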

\begin{proof}
 Note that $\mathrm{UT}^*_{n}(q)$, together with the diagonal matrices and the permutation matrices in $\GL_n(q)$ generate the whole general linear group $\GL_n(q)$. The action of the diagonal matrices multiplies the maximal minors of the generator matrix by a non-zero scalar, the action of the permutation matrices at most changes the sign of the maximal minors. Hence, these two subgroups do not change the \emph{non-zero-ness} of the maximal minors.
\end{proof}


\section{New Criterion for Gabidulin Codes}\label{sec:criterionGab}

In this section we derive a new criterion to establish if a given MRD code is a generalized Gabidulin code or not. The main result is stated in Theorem~\ref{thm:mainGab}.

Let $\mathcal{C}\subseteq \F_{q^m}^n$ be a linear MRD code of dimension $k$ (i.e.\ rank distance $d=n-k+1$) with generator matrix $G$. 
Recall the notation $[i]:=q^i$. We apply the Frobenius on vectors and matrices coordinate-wise, i.e., for $G\in \F_{q^m}^{k\times n}$ we have $G^{[i]} = (g_{jk}^{[i]})_{j,k}$ and for $\mathcal{C}\subseteq \F_{q^m}^n$ we have 
$\mathcal{C}^{[i]} = \{\boldsymbol c^{[i]} \mid \boldsymbol c \in \mathcal{C}\}$. 
In this section we let $s\in \N$, $s<m$ be such that $\gcd(s,m)=1$.

The following three Lemmas are needed to prove Proposition \ref{prop:Gab} and then Theorem \ref{thm:mainGab}.

\begin{lemma}\label{lem:inverseFrob}
Let $A\in \mathrm{GL}_{k}(q^{m})$. Then $(A^{-1})^{[1]} = (A^{[1]})^{-1}$.
\end{lemma}
\begin{proof}
We have that
\begin{align*} A^{-1} A = I_{k} &\iff (A^{-1} A)^{[1]} = I_{k} \\&\iff (A^{-1})^{[1]} A^{[1]} = I_{k} \\&\iff  (A^{-1})^{[1]} =(A^{[1]})^{-1} .
\end{align*}
\end{proof}

It is well-known that the roots of $x^q-x$ are exactly the elements of $\F_q$ (see e.g. \cite[Theorem 2.5]{li94}). For our main results we need a generalization of this result:
\begin{lemma}\label{lem:help}
If $\gcd(s,m)=1$, then the roots in $\F_{q^m}$ of $x^{[s]}-x$ are exactly the elements of $\F_q$.
\end{lemma}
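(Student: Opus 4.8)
The plan is to show that the polynomial $x^{[s]}-x = x^{q^s}-x$ has exactly $q$ roots in $\F_{q^m}$ and that these roots form precisely $\F_q$. Since $\F_q \subseteq \F_{q^m}$ and every element of $\F_q$ satisfies $a^q = a$, iterating the Frobenius $s$ times gives $a^{q^s}=a$, so $\F_q$ is certainly contained in the root set. The content of the lemma is the reverse inclusion, and the hypothesis $\gcd(s,m)=1$ must be used there.

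First I would recall the standard fact that the roots of $x^{q^s}-x$ in an algebraic closure are exactly the elements of the field $\F_{q^s}$; these are the fixed points of the $s$-fold Frobenius $\varphi^s$ where $\varphi(x)=x^q$. Hence any root $\beta \in \F_{q^m}$ of $x^{[s]}-x$ lies in $\F_{q^m}\cap \F_{q^s} = \F_{q^{\gcd(m,s)}}$, using the standard intersection formula for subfields of a common extension. The key step is then to invoke the coprimality assumption: $\gcd(m,s)=1$ forces $\F_{q^{\gcd(m,s)}}=\F_{q^1}=\F_q$, so every root in $\F_{q^m}$ already lies in $\F_q$. Combined with the forward inclusion, this gives equality of the two sets.

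An alternative and more self-contained route, in case one wants to avoid quoting the subfield-intersection formula, is the following. If $\beta\in\F_{q^m}$ satisfies $\beta^{[s]}=\beta$, then applying the map $x\mapsto x^{[s]}$ repeatedly shows $\beta^{[sj]}=\beta$ for every $j\ge 0$. Since $\beta\in\F_{q^m}$ we also have $\beta^{[m]}=\beta$, so $\beta$ is fixed by $x\mapsto x^{[t]}$ for every $t$ in the additive subgroup of $\Z$ generated by $s$ and $m$. Because $\gcd(s,m)=1$, that subgroup is all of $\Z$, so in particular $\beta^{[1]}=\beta^q=\beta$, which means $\beta\in\F_q$ by the classical characterization of the roots of $x^q-x$ cited just before the lemma. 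I consider this the cleaner argument to write out, since it uses only the preceding fact and a one-line Bézout-style observation.

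I do not anticipate a serious obstacle here; the only point requiring care is making explicit \emph{where} the coprimality hypothesis enters, since without it one would only conclude $\beta\in\F_{q^{\gcd(s,m)}}$ rather than $\beta\in\F_q$. A minor bookkeeping subtlety in the second approach is justifying that from $\beta^{[s]}=\beta$ and $\beta^{[m]}=\beta$ one obtains $\beta^{[as+bm]}=\beta$ for integer combinations; for negative exponents one uses that the Frobenius is an automorphism of $\F_{q^m}$, so $x\mapsto x^{[-i]}$ is well defined as noted in the preliminaries, and the fixed-point property transfers to inverse maps. Once that is in place, choosing $a,b$ with $as+bm=1$ closes the argument.
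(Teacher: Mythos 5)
Both routes you describe are correct. Your first argument is essentially the paper's own proof: the paper also passes to a common overfield (it uses $\F_{q^{ms}}$, which contains both $\F_{q^m}$ and $\F_{q^s}$), identifies the roots of $x^{[s]}-x$ with $\F_{q^s}$, and concludes via $\F_{q^m}\cap\F_{q^s}=\F_{q^{\gcd(m,s)}}=\F_q$. The Bézout-style argument you say you would actually write out is a genuinely different and more elementary route: it needs only the classical fact about $x^q-x$ quoted just before the lemma, plus the observation that the set of exponents $t$ with $\beta^{[t]}=\beta$ is closed under integer combinations of $s$ and $m$. What the paper's approach buys is brevity, at the cost of invoking the subfield-intersection formula; what yours buys is self-containedness, at the cost of the small bookkeeping point about negative exponents, which you handle correctly (and which can be avoided entirely by choosing $a$ with $0<a<m$ and $as\equiv 1\pmod m$, so that $\beta=\beta^{[as]}=\beta^{[1]}$ using only $\beta^{[m]}=\beta$). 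Either version is a complete and valid proof.
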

\begin{proof}
 Consider the field $F_{q^{ms}}$, then both $\F_{q^{m}}$ and $\F_{q^{s}}$ are subfields of it \cite[Theorem 2.6]{li94}.  Since $m$ and $s$ are coprime these two subfields only intersect in the base field $\F_{q}$.  Moreover, the roots of $x^{[s]}-x$ in $\F_{q^{ms}}$ are exactly the elements of $\F_{q^{s}}$, hence  the roots of it in $\F_{q^{m}}$ are the elements of $\F_{q}$.
\end{proof}


\begin{lemma}\label{lem:indep}
 Let $\boldsymbol{v}=(v_1,\dots,v_n)\in \F_{q^m}^n$ be of rank $r$ over $\F_q$. Then $\boldsymbol{v},\boldsymbol{v}^{[s]},\dots, \boldsymbol{v}^{[s(r-1)]}$ are linearly independent over $\F_{q^{m}}$.
\end{lemma}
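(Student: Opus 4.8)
The plan is to recast $\F_{q^m}$-linear independence of $\boldsymbol v,\boldsymbol v^{[s]},\dots,\boldsymbol v^{[s(r-1)]}$ as a statement about the kernel of a single linearized polynomial. Put $W:=\langle v_1,\dots,v_n\rangle_{q}$, so $\dim_{\F_q}W=r$ by hypothesis. Suppose toward a contradiction that there is a nontrivial relation $\sum_{i=0}^{r-1}a_i\boldsymbol v^{[si]}=\boldsymbol 0$ with $a_i\in\F_{q^m}$. Read coordinatewise, this says exactly that every entry $v_j$ is a root of the map $L\colon\F_{q^m}\to\F_{q^m}$, $L(x):=\sum_{i=0}^{r-1}a_i x^{[si]}$. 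Each monomial $x\mapsto x^{[si]}$ is additive and acts as the identity on $\F_q$, so $L$ is $\F_q$-linear; hence $\ker L$ is an $\F_q$-subspace containing every $v_j$, and therefore all of $W$. In particular $\dim_{\F_q}\ker L\ge r$.

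Everything then hinges on an upper bound for this kernel: I claim that any nonzero $L=\sum_{i=0}^{t}a_i x^{[si]}$ with $a_t\ne 0$ has $\dim_{\F_q}\ker L\le t$. It is worth stressing why this is the delicate point. Counting the roots of $L$ as an ordinary polynomial of degree $q^{s t}$ only gives $\dim_{\F_q}\ker L\le s t$, which is useless here; one must instead exploit that $L$ is assembled from the single automorphism $\sigma\colon x\mapsto x^{[s]}$, i.e.\ that $L$ is a $\sigma$-linearized polynomial $\sum_i a_i\sigma^i$. I would prove the claim by induction on $t$. Composing with a power of $\sigma$ is a bijection of $\F_{q^m}$ that leaves the kernel dimension unchanged and only lowers the degree, so I may assume $a_0\ne 0$.

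For $t=1$ the equation $L(x)=0$ reads $x^{[s]}=\beta x$ with $\beta=-a_0/a_1$; if $x_1,x_2$ are nonzero solutions then $(x_1/x_2)^{[s]}=x_1^{[s]}/x_2^{[s]}=x_1/x_2$, so $x_1/x_2\in\F_q$ by Lemma~\ref{lem:help}, and the solution space is at most one-dimensional over $\F_q$. For the inductive step choose $0\ne w\in\ker L$, set $\beta=w^{[s]}/w$ and $R(x)=x^{[s]}-\beta x$, so that $R(w)=0$. Right-dividing $L$ by $R$ in the ring of $\sigma$-linearized polynomials yields $L=Q\circ R+\rho$ with $Q$ of degree $t-1$ and a scalar remainder $\rho$; evaluating at $w$ gives $\rho w=0$, hence $\rho=0$ and $L=Q\circ R$. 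Since $R(\ker L)\subseteq\ker Q$ and $\ker R\subseteq\ker L$, the rank--nullity theorem applied to $R|_{\ker L}$ gives $\dim_{\F_q}\ker L\le\dim_{\F_q}\ker R+\dim_{\F_q}\ker Q\le 1+(t-1)=t$, using the induction hypothesis on $Q$.

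Applying the claim with $t\le r-1$ to our $L$ forces $\dim_{\F_q}\ker L\le r-1$, which contradicts $\dim_{\F_q}\ker L\ge r$ unless $L$ is identically zero, that is, unless all $a_i$ vanish. Thus no nontrivial relation exists and the vectors are $\F_{q^m}$-linearly independent. The main obstacle is precisely the kernel bound of the second and third paragraphs: the ordinary-degree root count is too lossy, and the correct mechanism is the $\sigma$-linearized (right-)division argument, fed by Lemma~\ref{lem:help} to pin down the one-dimensional base case.
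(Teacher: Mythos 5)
Your proof is correct, but it takes a genuinely different route from the paper's. The paper argues by root counting: it views $p(x)=\sum_i\lambda_i x^{[si]}$ as a $q^s$-linearized polynomial, passes to the auxiliary extension $\F_{q^{ms}}$ (where $\F_{q^s}$ lives), and invokes \cite[Lemma 4.3]{ks05} to conclude that $\langle v_1,\dots,v_n\rangle_{q^s}$ still has dimension $r$; the kernel of $p$ then contains $q^{rs}$ elements while $\deg p\le q^{s(r-1)}$, a contradiction. You instead stay entirely inside $\F_{q^m}$, treat $L$ as an $\F_q$-linear operator, and establish the sharper structural bound $\dim_{\F_q}\ker L\le t$ for a nonzero $\sigma$-polynomial of $\sigma$-degree $t$ by right division in the twisted polynomial ring, with Lemma~\ref{lem:help} supplying the one-dimensional base case. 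Both arguments ultimately rest on $\gcd(s,m)=1$ --- the paper through the dimension-preservation lemma of \cite{ks05}, you through the fact that the fixed field of $x\mapsto x^{[s]}$ on $\F_{q^m}$ is exactly $\F_q$. What the paper's route buys is brevity, at the cost of an excursion to $\F_{q^{ms}}$ and an external citation; what yours buys is self-containedness and a reusable general fact (the Ore-polynomial kernel bound), at the cost of a longer induction. Your reduction to $a_0\ne 0$, the division step $L=Q\circ R$ with vanishing scalar remainder, and the rank--nullity bookkeeping are all sound; the only cases you leave implicit (kernel trivial, so no $w$ exists; $\sigma$-degree $0$) are trivially fine.
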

\begin{proof}
 Assume that $\boldsymbol{v},\boldsymbol{v}^{[s]},\dots, \boldsymbol{v}^{[s(r-1)]}$ are not linearly independent over $\F_{q^{m}}$, i.e.\ there exist $\lambda_0,\dots,\lambda_{r-1}\in\F_{q^{m}}$, at least one $\lambda_i\neq 0$, such that 
\[\sum_{i=0}^{r-1} \lambda_i \boldsymbol{v}^{[is]}   =\boldsymbol{0} .\]
Then the $q^{s}$-linearized polynomial $p(x):= \sum_{i=0}^{r-1} \lambda_i x^{[si]} = \sum_{i=0}^{r-1} \lambda_i x^{(q^s)^{i}}  \in \F_{q^{ms}}[x]$ has roots $v_1,\dots,v_n$. Since $p(x)$ is linearized, all elements of the vector space $\langle v_1,\dots, v_n\rangle_{q^{s}}$ are roots of it. Since  $\langle v_1,\dots, v_n\rangle_q$ has dimension $r$, by \cite[Lemma 4.3]{ks05}, also $\langle v_1,\dots, v_n\rangle_{q^{s}}$ has dimension $r$. Hence, there are $q^{rs}$ roots of $p(x)$ in $\F_{q^{ms}}$. Hence $p(x)$ must have degree at least $q^{rs}$, which is a contradiction.
\end{proof}

The following straight-forward lemma is needed to prove Lemma \ref{lem:main}.
\begin{lemma}\label{lem:indep2}
 Let $\boldsymbol{w}_1,\dots, \boldsymbol{w}_k \in \F_{q^m}^n$ be linearly independent over $\F_{q^m}$. Then  $\boldsymbol{w}_1^{[s]},\dots, \boldsymbol{w}_k^{[s]} \in \F_{q^m}^n$ are also linearly independent over $\F_{q^m}$
\end{lemma}
\begin{proof}
 Assume that $\boldsymbol{w}_1^{[s]},\dots, \boldsymbol{w}_k^{[s]}$ are not linearly independent, i.e.\ there exist $\lambda_1,\dots,\lambda_k \in \F_{q^m}$ with
\[\sum_{i=1}^k \lambda_i \boldsymbol{w}_i^{[s]} = \boldsymbol{0} \iff \left(\sum_{i=1}^k \lambda_i^{[-s]} \boldsymbol{w}_i\right)^{[s]} = \boldsymbol{0}  \iff \sum_{i=1}^k \lambda_i^{[-s]} \boldsymbol{w}_i = \boldsymbol{0} .\]
Thus the vectors $\boldsymbol{w}_1,\dots, \boldsymbol{w}_k$ are not linearly independent over $\F_{q^m}$, which is a contradiction.
\end{proof}

The following result is a generalization of \cite[Theorem 1]{gi10}.

\begin{lemma}\label{lem:main}
Let $\mathcal{W}\subset \F_{q^m}^n$ be a subspace of dimension $k \leq n$ satisfying $\mathcal{W}^{[s]} = \mathcal{W}$. Then $\mathcal{W}$ has a generator matrix in $\F_q^{k\times n}$. In particular $\mathcal{W}$ contains elements of rank $1$ over $\F_{q}$.
\end{lemma}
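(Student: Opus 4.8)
The plan is to show that $\mathcal{W}$ is \emph{defined over} $\F_q$ by exhibiting its unique reduced row echelon generator matrix and arguing that this matrix is fixed by the Frobenius, hence has all entries in $\F_q$.

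First I would reduce the hypothesis $\mathcal{W}^{[s]} = \mathcal{W}$ to ordinary Frobenius invariance $\mathcal{W}^{[1]} = \mathcal{W}$. Iterating the assumption gives $\mathcal{W}^{[si]} = \mathcal{W}$ for every $i \in \N$. Since $\gcd(s,m)=1$, choose $t$ with $st \equiv 1 \pmod m$, say $st = 1 + m\ell$; then for any $x \in \F_{q^m}$ we have $x^{[st]} = x^{[1]}$, because $x^{[m]} = x$. Consequently $\mathcal{W}^{[1]} = \mathcal{W}^{[st]} = \mathcal{W}$.

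Next, let $G \in \F_{q^m}^{k\times n}$ be the \emph{unique} generator matrix of $\mathcal{W}$ in reduced row echelon form. Applying the Frobenius $x \mapsto x^{[1]}$ coordinate-wise yields $G^{[1]}$, which is a generator matrix of $\mathcal{W}^{[1]} = \mathcal{W}$. The key observation is that $G^{[1]}$ is again in reduced row echelon form: the Frobenius is an $\F_q$-linear field automorphism fixing $0$ and $1$, so it preserves the pivot positions, keeps the leading entries equal to $1$, and keeps the entries above and below the pivots zero. By uniqueness of the reduced row echelon form, $G^{[1]} = G$. Hence every entry $g_{ij}$ satisfies $g_{ij}^{[1]} = g_{ij}$, i.e.\ $g_{ij}^q = g_{ij}$, so $g_{ij} \in \F_q$ by the standard fact that the roots of $x^q - x$ are exactly the elements of $\F_q$ (Lemma \ref{lem:help} with $s=1$). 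Thus $G \in \F_q^{k\times n}$ is the desired generator matrix.

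Finally, for the rank-$1$ claim (assuming $k \geq 1$), I would take any nonzero row $\boldsymbol{g} = (g_1, \ldots, g_n)$ of $G$. Its rank over $\F_q$ equals $\dim_{\F_q}\langle g_1, \ldots, g_n\rangle_{q}$, and since all $g_i \in \F_q$ with at least one nonzero, this dimension is exactly $1$; hence $\boldsymbol{g} \in \mathcal{W}$ has rank $1$ over $\F_q$. I expect the main subtlety to be the step asserting that $G^{[1]}$ remains in reduced row echelon form and invoking its uniqueness; the reduction in the first paragraph and the rank computation at the end are routine.
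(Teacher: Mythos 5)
Your proof is correct, and it reaches the conclusion by a cleaner route than the paper. The paper's proof is computational: it writes the basis-change relation $W^{[s]} = AW$, solves $A = W_1^{[s]}W_1^{-1}$ using an invertible $k\times k$ submatrix $W_1$, applies the Frobenius once more together with Lemma \ref{lem:inverseFrob} to deduce $(W_1^{-1}W_2)^{[s]} = W_1^{-1}W_2$, and then invokes Lemma \ref{lem:help} to place the entries of $[I_k \mid W_1^{-1}W_2]$ in $\F_q$. Your argument abstracts the same underlying idea --- produce a canonical generator matrix and show it is fixed by the Frobenius --- but replaces the explicit computation by the uniqueness of the reduced row echelon form, which is the slicker way to see it (the paper's $[I_k\mid W_1^{-1}W_2]$ is, after the ``WLOG'' column reordering, exactly that canonical form). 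You also add a preliminary reduction from $\mathcal{W}^{[s]}=\mathcal{W}$ to $\mathcal{W}^{[1]}=\mathcal{W}$ via $st\equiv 1\pmod m$; this is valid and lets you quote only the classical fact about $x^q-x$ rather than the generalized Lemma \ref{lem:help}, though it is not strictly necessary, since your RREF argument would work verbatim with $[s]$ in place of $[1]$ and Lemma \ref{lem:help} as stated. One step you pass over quickly --- that $G^{[1]}$ is again a generator matrix of $\mathcal{W}^{[1]}$, i.e.\ that its rows remain linearly independent and span --- is exactly Lemma \ref{lem:indep2} of the paper, so you may wish to cite it explicitly. The final rank-$1$ computation matches the paper's.
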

\begin{proof}
If $\{\boldsymbol{w}_1, \dots , \boldsymbol{w}_{k}\} \subset \F_{q^m}^n$ is a basis for $\mathcal{W}$, then by Lemma \ref{lem:indep2} $\{\boldsymbol{w}_1^{[s]}, \dots , \boldsymbol{w}_k^{[s]}\}$ is also a basis of $\mathcal{W}$. 
Then there exists $A\in \mathrm{GL}_k(q^{m})$ such that 
$$\left(\begin{array}{cccc}w_{1,1}^{[s]} & w_{1,2}^{[s]} & \ldots & w_{1,n}^{[s]} \\ w_{2,1}^{[s]} & w_{2,2}^{[s]} & \ldots & w_{2,n}^{[s]} \\ &  & \vdots &  \\ w_{k,1}^{[s]} & w_{k,2}^{[s]} & \ldots & w_{k,n}^{[s]} \end{array}\right) = A \left(\begin{array}{cccc}w_{1,1} & w_{1,2} & \ldots & w_{1,n} \\ w_{2,1} & w_{2,2} & \ldots & w_{2,n} \\ &  & \vdots &  \\ w_{k,1} & w_{k,2} & \ldots & w_{k,n} \end{array}\right).$$ 
Since the rightmost matrix has rank $k$, there exists a set of $k$ linearly independent (over $\F_{q^{m}}$) columns. 
Without loss of generality, we can assume that the first $k$ columns are linearly independent. Thus the submatrix $W_1 := (w_{i,j})_{i,j=1}^{k}$ is invertible (and therefore $W_1^{[s]}$ is also invertible by Lemma \ref{lem:indep2}), and so we can solve 
$$A = W_1^{[s]}W_1^{-1}.$$ 
Define $W_2 := \left.(w_{i,j})_{i=1}^{k}\right._{j=k+1}^{n}$. Then we have 
$$W_2^{[s]} = W_1^{[s]}W_1^{-1}W_2 .$$ 
If we apply the Frobenius map $s$ times on both sides and use Lemma \ref{lem:inverseFrob}, we obtain 
\begin{align*} 
W_2^{[2s]} &= W_1^{[2s]}(W_1^{-1})^{[s]}W_2^{[s]} \\
 &= W_1^{[2s]}(W_1^{[s]})^{-1}W_1^{[s]}W_1^{-1}W_2 \\ &= W_1^{[2s]}W_{1}^{-1}W_2. 
\end{align*} 
Then, we have 
$$W_1^{[2s]}(W_1^{-1})^{[s]}W_2^{[s]} = W_1^{[2s]}W_{1}^{-1}W_2.$$ 
Since $W_1^{[2s]}$ is invertible, we obtain $$(W_1^{-1}W_2)^{[s]} = W_1^{-1}W_2,$$ and therefore we must have that $W_1^{-1}W_2$ has only entries in $\F_q$, by Lemma \ref{lem:help}. Therefore, a generator matrix for $\mathcal{W}$ can be expressed as $W_{1}^{-1}[W_{1} \mid W_{2}] = [I_k \mid W_1^{-1}W_2] \in \F_q^{k\times n}$, whose rows have rank weight $1$ over $\F_{q}$.
\end{proof}

We can now state and prove the central ingredient for the main result in Theorem \ref{thm:mainGab}:

\begin{proposition}\label{prop:Gab}
Suppose that ${\mathcal{C}}\subset \F_{q^m}^{n}$ is a linear code of dimension $k\geq 2$ and minimum rank distance at least $k$. 
 If $\dim(\mathcal{C} \cap \mathcal{C}^{[s]}) = k-1$ (this automatically implies that $k<n$), then there exists a generator matrix for $\mathcal{C}$ of the form
\[G^* = \left(\begin{array}{cccc}
g_1 & g_2 & \dots & g_n\\
g_1^{[s]} & g_2^{[s]}& \dots & g_n^{[s]}\\
&&\vdots\\
g_1^{[s(k-1)]} & g_2^{[s(k-1)]} & \dots & g_n^{[s(k-1)]}
              \end{array}
 \right)\]
with $g_1,\dots,g_n \in \F_{q^m}$.
\end{proposition}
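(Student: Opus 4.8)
The goal is to build, out of the hypothesis $\dim(\mathcal{C}\cap\mathcal{C}^{[s]})=k-1$, a Moore-type generator matrix whose rows are the successive Frobenius powers $\boldsymbol{g},\boldsymbol{g}^{[s]},\dots,\boldsymbol{g}^{[s(k-1)]}$ of a single vector. The natural strategy is to locate the right vector $\boldsymbol{g}$ first, and then verify that its Frobenius orbit spans $\mathcal{C}$ and is linearly independent. The intersection space $\mathcal{D}:=\mathcal{C}\cap\mathcal{C}^{[s]}$ will be the main object to exploit.

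First I would record the basic structural facts. Since $\dim\mathcal{C}=k$ and $\dim\mathcal{C}^{[s]}=k$ (applying Lemma \ref{lem:indep2} to a basis of $\mathcal{C}$ shows $\dim\mathcal{C}^{[s]}=\dim\mathcal{C}$), the hypothesis $\dim\mathcal{D}=k-1$ together with the dimension formula $\dim(\mathcal{C}+\mathcal{C}^{[s]})=2k-(k-1)=k+1$ forces $k+1\le n$, so indeed $k<n$. Next I would iterate the Frobenius: set $\mathcal{D}_i:=\mathcal{C}^{[si]}\cap\mathcal{C}^{[s(i+1)]}$, which is just $\mathcal{D}^{[si]}$ and hence also has dimension $k-1$. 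The plan is to consider the chain of subspaces obtained by intersecting consecutive Frobenius translates, and in particular to track how $\mathcal{C}\cap\mathcal{C}^{[s]}\cap\cdots\cap\mathcal{C}^{[s\ell]}$ shrinks. The key combinatorial step is to show that each successive intersection drops the dimension by exactly one until, after $k-1$ steps, one reaches a one-dimensional space $\mathcal{E}$; at that point $\mathcal{E}=\mathcal{E}^{[s]}$ should hold (since $\mathcal{E}$ is the common refinement that is Frobenius-stable), and then Lemma \ref{lem:main} applies to give $\mathcal{E}$ a rank-one generator, i.e.\ an element of $\F_q^{\,n}$.

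Granting a Frobenius-stable one-dimensional subspace, I would then run the construction in the other direction. The minimum-distance hypothesis (rank distance $\ge k$) is the crucial nondegeneracy input: it guarantees that any nonzero codeword of $\mathcal{C}$ has rank at least $k$ over $\F_q$, so in particular a generator vector $\boldsymbol{g}$ of rank $\ge k$ exists, and by Lemma \ref{lem:indep} its powers $\boldsymbol{g},\boldsymbol{g}^{[s]},\dots,\boldsymbol{g}^{[s(k-1)]}$ are linearly independent over $\F_{q^m}$. The remaining point is to choose $\boldsymbol{g}\in\mathcal{C}$ so that all these $k$ powers actually lie in $\mathcal{C}$; this is where the stability of the intersection chain is used, to propagate membership $\boldsymbol{g}^{[si]}\in\mathcal{C}$ for $0\le i\le k-1$. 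Once that is established, these $k$ independent vectors live in the $k$-dimensional space $\mathcal{C}$, hence form a basis, and the matrix $G^*$ with rows $\boldsymbol{g}^{[si]}$ is the desired Moore-type generator matrix.

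\textbf{Main obstacle.}
The delicate part is the bookkeeping on the intersection chain: showing that the dimensions $\dim\bigl(\mathcal{C}\cap\mathcal{C}^{[s]}\cap\cdots\cap\mathcal{C}^{[s\ell]}\bigr)$ descend by exactly one at each step, neither stalling nor collapsing too fast, and extracting from this a vector $\boldsymbol{g}$ whose full orbit of length $k$ sits inside $\mathcal{C}$. I expect one has to argue that $\mathcal{C}+\mathcal{C}^{[s]}+\cdots$ grows dimension-by-dimension as its dual statement, using repeatedly that applying $[s]$ preserves dimensions (Lemma \ref{lem:indep2}) and that the minimum-distance bound rules out the unwanted case where $\boldsymbol{g}$ would have rank $<k$ and the Frobenius powers degenerate. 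The interplay between \emph{rank} $\ge k$ (forcing independence of powers via Lemma \ref{lem:indep}) and the \emph{intersection} dimension $k-1$ (forcing the powers to stay inside $\mathcal{C}$) is the heart of the matter.
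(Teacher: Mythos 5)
Your plan contains a step that is not just unproven but actually false, and it inverts the role of Lemma \ref{lem:main}. You propose to descend the chain $\mathcal{C}\cap\mathcal{C}^{[s]}\cap\cdots\cap\mathcal{C}^{[s\ell]}$ to a one-dimensional space $\mathcal{E}$ that is Frobenius-stable, and then apply Lemma \ref{lem:main} to extract a rank-one generator of $\mathcal{E}$. But $\mathcal{E}\subseteq\mathcal{C}$, and the hypothesis is that the minimum rank distance of $\mathcal{C}$ is at least $k\geq 2$, so $\mathcal{C}$ contains no element of rank $1$ over $\F_q$; hence no nonzero Frobenius-stable subspace of $\mathcal{C}$ can exist at all. (Indeed the full orbit intersection $\bigcap_i \mathcal{C}^{[si]}$ is Frobenius-stable and therefore must be zero.) The paper uses Lemma \ref{lem:main} in exactly the opposite direction: it concludes that $\mathcal{W}:=\mathcal{C}\cap\mathcal{C}^{[s]}$ satisfies $\mathcal{W}^{[s]}\neq\mathcal{W}$, which forces $\mathcal{W}+\mathcal{W}^{[s]}=\mathcal{C}^{[s]}$ and hence $\dim(\mathcal{W}\cap\mathcal{W}^{[s]})=k-2$; this is what allows an induction on $k$ applied to $\mathcal{W}$. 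The correct picture is that the terminal one-dimensional intersection is spanned by $\boldsymbol{g}^{[s(k-1)]}$, which is not fixed by $[s]$.

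Beyond that misstep, the two parts you flag as "the delicate part" --- that the intersection chain drops dimension by exactly one at each stage, and that membership $\boldsymbol{g}^{[si]}\in\mathcal{C}$ propagates for $0\leq i\leq k-1$ --- are precisely the content of the proof, and your proposal does not supply arguments for them. The paper's induction handles both at once: the inductive hypothesis gives $\mathcal{W}$ a basis $\{\boldsymbol{w},\boldsymbol{w}^{[s]},\dots,\boldsymbol{w}^{[s(k-2)]}\}$; since $\boldsymbol{w}\in\mathcal{W}\subseteq\mathcal{C}^{[s]}$ as well, the $k$ vectors $\boldsymbol{w},\dots,\boldsymbol{w}^{[s(k-1)]}$ all lie in $\mathcal{C}^{[s]}$ and are independent by Lemma \ref{lem:indep} (here the distance bound enters, giving $\boldsymbol{w}$ rank at least $k$), so they form a basis of $\mathcal{C}^{[s]}$, and pulling back by $[-s]$ gives the Moore basis of $\mathcal{C}$ with $\boldsymbol{g}=\boldsymbol{w}^{[-s]}$. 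You correctly identified the interplay between the rank bound (via Lemma \ref{lem:indep}) and the intersection dimension, and the $k=2$ base case of your outline would go through, but as written the argument both relies on a false intermediate claim and leaves the essential inductive mechanism unconstructed.
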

\begin{proof}
 We prove this inductively on $k$. First assume that $k=2$. Then $\dim(\mathcal{C} \cap \mathcal{C}^{[s]}) = 1$, i.e.\ there exists $\boldsymbol{g}'\in \mathcal{C}$ such that $ \mathcal{C} \cap \mathcal{C}^{[s]} = \langle \boldsymbol{g}' \rangle_{q^m}$. Since $\boldsymbol{g}' \in \mathcal{C}^{[s]}$, we get that ${\boldsymbol{g}'}^{[-s]} \in \mathcal{C}$. The minimum rank distance of $\mathcal{C}$ is at least $k=2$, i.e.\ the rank of ${\boldsymbol{g}'}^{[-s]}$ over $\F_{q}$ is at least $2$. Then, by Lemma \ref{lem:indep}, ${\boldsymbol{g}'}^{[-s]}$  and $\boldsymbol{g}'$ are linearly independent. 
Hence they form a basis of $\mathcal{C}$ and we can rename $\boldsymbol{g}:={\boldsymbol{g}'}^{[-s]}$ to write a generator matrix 
\[G^* = \left(\begin{array}{cc}
\boldsymbol{g} \\ \boldsymbol{g}^{[s]}     \end{array}
 \right)
.\]

We now explain the induction step $(k-1)\rightarrow k$.
Let $\mathcal{W} = {\mathcal{C}}\cap {\mathcal{C}}^{[s]}$, then we know from Lemma \ref{lem:main} that $\mathcal{W}^{[s]} \neq \mathcal{W}$, because the minimum rank distance of $\mathcal{C}$ is at least $k$. Since $\mathcal{W},\mathcal{W}^{[s]} \subset \mathcal{C}^{[s]}$, both with codimension $1$, we get $\langle \mathcal{W}, \mathcal{W}^{[s]}\rangle_{q^m} =\mathcal{C}^{[s]}$. Then
$$\dim(\mathcal{W}\cap \mathcal{W}^{[s]}) = \dim(\mathcal{W}) + \dim(\mathcal{W}^{[s]}) - \dim(\mathcal{W}+\mathcal{W}^{[s]}) = 2(k-1) - k = k-2  .$$
Furthermore, since $\mathcal{W}\subset \mathcal{C}$, the minimum rank distance of $\mathcal{W}$ is at least $k$. 
Therefore, $\mathcal{W}$ satisfies the conditions of the induction hypothesis, and so we can express $\mathcal{W}$ in terms of some basis of the form 
$$\{\boldsymbol{w}, \boldsymbol{w}^{[s]}, \dots, \boldsymbol{w}^{[s(k-2)]}\}.$$ 
Hence, $\{\boldsymbol{w}, \boldsymbol{w}^{[s]}, \dots, \boldsymbol{w}^{[s(k-2)]}\}\in \mathcal{C}$ and thus $\{\boldsymbol{w}^{[s]}, \boldsymbol{w}^{[2s]}, \dots, \boldsymbol{w}^{[s(k-1)]}\}\in \mathcal{C}^{[s]}$.  On the other hand, $\boldsymbol{w}\in \mathcal{W}\subset \mathcal{C}^{[s]}$, i.e.\ $\{\boldsymbol{w}, \boldsymbol{w}^{[s]}, \dots, \boldsymbol{w}^{[s(k-1)]}\}\in \mathcal{C}^{[s]}$. By Lemma \ref{lem:indep} this set is linearly independent, i.e.\ it is a basis of $\mathcal{C}^{[s]}$. This in turn implies that $\{\boldsymbol{w}^{[-s]},\boldsymbol{w}, \boldsymbol{w}^{[s]}, \dots, \boldsymbol{w}^{[s(k-2)]}\}$ is a basis of $\mathcal{C}$. Define $\boldsymbol{g}=\boldsymbol{w}^{[-s]}$, then $\{\boldsymbol{g}, \boldsymbol{g}^{[s]}, \dots, \boldsymbol{g}^{[s(k-1)]}\}$ is a basis of $\mathcal{C}$.


\end{proof}

\begin{lemma}\label{lem:finalGab}
 Let $\mathcal{C}$ be a linear MRD code of dimension $k<n$ with generator matrix
\[G^* = \left(\begin{array}{cccc}
g_1 & g_2 & \dots & g_n\\
g_1^{[s]} & g_2^{[s]}& \dots & g_n^{[s]}\\
&&\vdots\\
g_1^{[s(k-1)]} & g_2^{[s(k-1)]} & \dots & g_n^{[s(k-1)]}
              \end{array}
 \right) .\]
Then $g_1,\dots,g_n$ are linearly independent over $\F_q$.
\end{lemma}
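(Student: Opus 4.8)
The plan is to prove the contrapositive by invoking the generator-matrix MRD criterion of Corollary~\ref{cor3}. Suppose, for contradiction, that $g_1, \dots, g_n$ are linearly dependent over $\F_q$; then there is a nonzero vector $\boldsymbol{a} = (a_1, \dots, a_n) \in \F_q^n$ with $\sum_{j=1}^n a_j g_j = 0$. The key observation is that, because the coefficients $a_j$ lie in $\F_q$ and are therefore fixed by the Frobenius, this single dependence is inherited by \emph{every} row of $G^*$: for each $0 \le \ell \le k-1$ we have $\sum_j a_j g_j^{[s\ell]} = \bigl(\sum_j a_j g_j\bigr)^{[s\ell]} = 0$. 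Thus $\boldsymbol{a}$ lies in the right kernel of $G^*$ over $\F_{q^m}$.

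Next I would package this into a matrix to which Corollary~\ref{cor3} applies. Since $k < n$, I can extend $\boldsymbol{a}$ to an $\F_q$-linearly independent set of $k$ vectors and assemble them as the rows of a matrix $V \in \F_q^{k \times n}$ with first row $\boldsymbol{a}$ and $\rk_q(V) = k$. A direct computation of $V (G^*)^T$ shows that its $(i,\ell)$ entry equals $\bigl(\sum_j V_{ij}\, g_j\bigr)^{[s(\ell-1)]}$, again using that $V_{ij} \in \F_q$ commutes with the Frobenius. For the first row (coefficients $\boldsymbol{a}$) every entry is $0^{[s(\ell-1)]} = 0$, so $V(G^*)^T$ has a zero row and hence $\rk_{q^m}(V(G^*)^T) \le k-1 < k$. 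This contradicts Corollary~\ref{cor3}, which forces $\rk_{q^m}(V(G^*)^T) = k$ for every rank-$k$ matrix $V \in \F_q^{k\times n}$ whenever $\mathcal{C}$ is MRD; the contradiction yields the claim.

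The step I expect to require the most care is recognizing why the weaker Lemma~\ref{lem:helpMRD} does not suffice and why one must invoke the full criterion of Corollary~\ref{cor3}. The maximal minors of $G^*$ are generalized ($q^s$-)Moore determinants of $k$-subsets of the columns, so Lemma~\ref{lem:helpMRD} only guarantees that every $k$ of the $g_j$ are $\F_q$-independent --- which does \emph{not} imply that all $n$ of them are independent (think of three pairwise independent vectors spanning only a $2$-dimensional space). The strength of Corollary~\ref{cor3} is that it probes arbitrary $\F_q$-linear combinations of the columns through the matrix $V$, and it is exactly such a combination --- the global dependence $\boldsymbol{a}$ --- that detects the failure. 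The remaining details (extending $\boldsymbol{a}$ to a rank-$k$ matrix over $\F_q$, and interchanging the summation with the Frobenius) are routine.
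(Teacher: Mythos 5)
Your proof is correct and follows essentially the same route as the paper's: both exploit the fact that an $\F_q$-linear dependence $\sum_j a_j g_j = 0$ is fixed by the Frobenius and hence annihilates every row of $G^*$ simultaneously. The only difference is cosmetic --- the paper packages the dependence as a column operation $A \in \GL_n(q)$ producing a zero column of $G^*A$ and invokes Theorem~\ref{thm:mainMRD}, while you package it as the first row of a rank-$k$ matrix $V \in \F_q^{k\times n}$ producing a zero row of $V(G^*)^T$ and invoke Corollary~\ref{cor3} directly.
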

\begin{proof}
 We prove this by contradiction. 
Assume that WLOG $g_1$ is in $\langle g_2,\dots,g_n\rangle_q$, i.e.\ there exist $\lambda_2,\dots,\lambda_n \in \F_q$ with $g_1 = \sum_{i=2}^n \lambda_i g_i$. Then 
\[g_1^{[j]} = \left(\sum_{i=2}^n \lambda_i g_i\right)^{[j]} = \sum_{i=2}^n \lambda_i^{[j]} g_i^{[j]}= \sum_{i=2}^n \lambda_i g_i^{[j]}\]
i.e.\ $g_1^{[j]} \in \langle g_2^{[j]},\dots,g_n^{[j]}\rangle_q$ for any $j\in \N$. 
Hence there exists $A\in \GL_n(q)$ such that the first column of $G^* A$ is zero. It follows from Theorem \ref{thm:mainMRD} that $\mathcal{C}$ is not a MRD code, which is a contradiction. 
\end{proof}

\begin{theorem}\label{thm:mainGab}
Let $\mathcal{C}\subseteq \F_{q^m}^n$ be a linear MRD code of dimension $k<n$. Then $\dim(\mathcal{C} \cap \mathcal{C}^{[s]})=k-1$ if and only if $\mathcal{C}$ is a generalized Gabidulin code.
\end{theorem}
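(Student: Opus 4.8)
The plan is to prove both directions, noting that the machinery in the preceding lemmas and proposition has been arranged precisely so that each direction becomes short. The key observation is that a generalized Gabidulin code, by its very definition, is built from a single row $\boldsymbol{g}=(g_1,\dots,g_n)$ together with its Frobenius twists $\boldsymbol{g}^{[s]},\dots,\boldsymbol{g}^{[s(k-1)]}$; this built-in shift structure is exactly what produces the intersection dimension $k-1$.

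For the \emph{if} direction, suppose $\mathcal{C}$ is a generalized Gabidulin code with generator matrix $G^*$ as in the definition, whose rows are $\boldsymbol{g},\boldsymbol{g}^{[s]},\dots,\boldsymbol{g}^{[s(k-1)]}$ for some $\boldsymbol{g}$ whose coordinates $g_1,\dots,g_n$ are $\F_q$-linearly independent. First I would observe that $\mathcal{C}^{[s]}$ is generated by $\boldsymbol{g}^{[s]},\boldsymbol{g}^{[2s]},\dots,\boldsymbol{g}^{[sk]}$, simply by applying the Frobenius $x\mapsto x^{[s]}$ coordinatewise to each generator. The middle $k-1$ generators $\boldsymbol{g}^{[s]},\dots,\boldsymbol{g}^{[s(k-1)]}$ lie in both $\mathcal{C}$ and $\mathcal{C}^{[s]}$, so $\dim(\mathcal{C}\cap\mathcal{C}^{[s]})\geq k-1$. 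To get equality I would argue that the intersection cannot be all of $\mathcal{C}$: if it were, then $\boldsymbol{g}\in\mathcal{C}^{[s]}$, which combined with the generators of $\mathcal{C}^{[s]}$ would force $\boldsymbol{g},\boldsymbol{g}^{[s]},\dots,\boldsymbol{g}^{[sk]}$ to span only a $k$-dimensional space, contradicting the linear independence of $\boldsymbol{g},\dots,\boldsymbol{g}^{[sk]}$ guaranteed by Lemma \ref{lem:indep} (the coordinates of $\boldsymbol{g}$ have full rank $n\geq k+1$ over $\F_q$ since the code is MRD with $k<n$, so any $k+1$ consecutive twists are independent). Hence $\dim(\mathcal{C}\cap\mathcal{C}^{[s]})=k-1$.

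For the \emph{only if} direction, suppose $\dim(\mathcal{C}\cap\mathcal{C}^{[s]})=k-1$. Since $\mathcal{C}$ is MRD of dimension $k<n$, its minimum rank distance is $n-k+1\geq 2$, and more importantly it is at least $k$ precisely when $n-k+1\geq k$; to invoke Proposition \ref{prop:Gab} I need the hypothesis that the minimum distance is at least $k$, which here equals $n-k+1$. I would therefore check that the MRD property together with $k<n$ gives a code satisfying the hypotheses of Proposition \ref{prop:Gab} (the distance being $n-k+1$, and $k\geq 2$ following since $k=1$ would make the intersection condition vacuous or trivially handled). Applying Proposition \ref{prop:Gab}, I obtain a generator matrix $G^*$ of the Moore-like shifted form with entries $g_1,\dots,g_n\in\F_{q^m}$. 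Then Lemma \ref{lem:finalGab} shows these $g_i$ are $\F_q$-linearly independent, which is exactly the condition in the definition of a generalized Gabidulin code. Hence $\mathcal{C}$ is a generalized Gabidulin code.

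The main obstacle I anticipate is reconciling the distance hypothesis: Proposition \ref{prop:Gab} requires minimum rank distance \emph{at least} $k$, whereas an MRD code of dimension $k$ has distance exactly $n-k+1$, and these coincide in the relevant regime but one must confirm $n-k+1\geq k$ is not actually needed, or handle the case where it fails. Re-reading the statement of Proposition \ref{prop:Gab}, its conclusion only produces the shifted generator matrix and does not itself require the MRD property beyond the distance bound, so I expect the honest resolution is that the intersection-dimension hypothesis $\dim(\mathcal{C}\cap\mathcal{C}^{[s]})=k-1$ is the genuinely restrictive input and the distance condition is used only to rule out $\mathcal{W}^{[s]}=\mathcal{W}$ inside the induction; I would verify carefully that an MRD code of dimension $k$ with $k<n$ always supplies enough distance for that step, appealing to Lemma \ref{lem:main} exactly as Proposition \ref{prop:Gab} does. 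A secondary subtlety is ensuring the edge cases $k=1$ and the degenerate intersection behavior are addressed, but these are routine.
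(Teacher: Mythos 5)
Your \emph{if} direction (generalized Gabidulin implies $\dim(\mathcal{C}\cap\mathcal{C}^{[s]})=k-1$) is correct and in fact more detailed than the paper's one-line justification: bounding the intersection below by the $k-1$ shared Frobenius twists and ruling out $\mathcal{C}=\mathcal{C}^{[s]}$ via Lemma \ref{lem:indep} (using that $\boldsymbol{g}$ has $\F_q$-rank $n\geq k+1$) is exactly right.

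The \emph{only if} direction has a genuine gap, and it is precisely the issue you flagged as ``the main obstacle'' but then resolved incorrectly. Proposition \ref{prop:Gab} requires minimum rank distance at least $k$, while an MRD code of dimension $k$ has distance exactly $n-k+1$, which is at least $k$ only when $k\leq (n+1)/2$. Your proposed resolution --- that the distance hypothesis is used only to rule out $\mathcal{W}^{[s]}=\mathcal{W}$ inside the induction and can therefore be relaxed --- does not hold: the hypothesis is also used at the end of the induction step, where Lemma \ref{lem:indep} is invoked to show that $\boldsymbol{w},\boldsymbol{w}^{[s]},\dots,\boldsymbol{w}^{[s(k-1)]}$ are linearly independent, and this needs the $\F_q$-rank of $\boldsymbol{w}$ (hence the minimum distance) to be at least $k$; the base case $k=2$ uses it in the same way. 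So for an MRD code with $(n+1)/2 < k < n-1$ (for instance $n=7$, $k=5$, where $d=3<5$) your argument does not go through. The paper's fix is a case split with dualization: if $k>(n+1)/2$, pass to $\mathcal{C}^\perp$, which by Proposition \ref{prop:dual1} is MRD of dimension $n-k$ with minimum distance $k+1>n-k$, and which satisfies $\dim(\mathcal{C}^\perp\cap(\mathcal{C}^\perp)^{[s]})=n-k-1$ (this follows from the hypothesis on $\mathcal{C}$ because $(\mathcal{C}^{[s]})^\perp=(\mathcal{C}^\perp)^{[s]}$ and a dimension count); Proposition \ref{prop:Gab} and Lemma \ref{lem:finalGab} then show $\mathcal{C}^\perp$ is a generalized Gabidulin code, and Proposition \ref{prop:dual2} transfers this back to $\mathcal{C}$. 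You need to add this second case (or an equivalent device) for your proof to be complete.
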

\begin{proof}
Let $\mathcal{C}$ be a generalized Gabidulin code of dimension $k$ with generalization parameter $s$. Then it follows from the structure of the generator matrix of $\mathcal{C}$ that $\dim(\mathcal{C} \cap \mathcal{C}^{[s]})=k-1$, which proves the first direction.

For the other direction we distinguish two cases:  
If $k\leq (n+1)/2$, then the minimum distance of $\mathcal{C}$ is at least $k$. Then it follows from Proposition \ref{prop:Gab} that $\mathcal{C}$ has a generator matrix of the form 
\[G^* = \left(\begin{array}{cccc}
g_1 & g_2 & \dots & g_n\\
g_1^{[s]} & g_2^{[s]}& \dots & g_n^{[s]}\\
&&\vdots\\
g_1^{[s(k-1)]} & g_2^{[s(k-1)]} & \dots & g_n^{[s(k-1)]}
              \end{array}
 \right) .\]
It follows from Lemma \ref{lem:finalGab} that the $g_i$ are linearly independent over $\F_q$. This is the definition of a generalized Gabidulin code.

If $k>(n+1)/2$, then  it follows from Proposition \ref{prop:dual1} that the dual code $\mathcal{C}^{\perp}\subseteq \F_{q^{m}}^{n}$ has dimension $n-k$ and minimum distance $k+1>n-k$, i.e.\ we can use Proposition \ref{prop:Gab} and Lemma  \ref{lem:finalGab} as before to show that $\mathcal{C}^{\perp}$ is a generalized Gabidulin code. Since the dual of a generalized Gabidulin code is again a generalized Gabidulin code (see Proposition \ref{prop:dual2}), the statement follows.
\end{proof}

%
%
%
%
%


\section{Non-Gabidulin MRD Codes}\label{sec:constructions}

\subsection{General results}

In this subsection we want to state some general results on the non-existence of non-Gabidulin MRD codes, i.e.\ for which parameters all MRD codes actually are Gabidulin codes.

\begin{theorem}\label{thm:trivdim}
 All linear MRD codes in $\F_{q^m}^n$ of dimension $k=1$ or $k=n-1$ are Gabidulin codes.
\end{theorem}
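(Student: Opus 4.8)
The plan is to reduce the statement to the one-dimensional case and then obtain the case $k=n-1$ for free by duality, so that the machinery of Sections~\ref{sec:criterionMRD} and~\ref{sec:criterionGab} is not even needed.

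First I would dispose of $k=1$ directly from the minimum-distance condition. If $\mathcal{C}\subseteq\F_{q^m}^n$ is a linear MRD code of dimension $1$, then it attains the Singleton bound with minimum rank distance $d=n-k+1=n$. A generator matrix of $\mathcal{C}$ is a single vector $\boldsymbol{g}=(g_1,\dots,g_n)\in\F_{q^m}^n$, and since $\boldsymbol{g}$ is itself a nonzero codeword it must have rank $n$ over $\F_q$ (this is also where the implicit standing assumption $n\leq m$ enters, as otherwise no vector of rank $n$ exists). Having $\rk_q(\boldsymbol{g})=n$ is exactly the statement that $g_1,\dots,g_n$ are linearly independent over $\F_q$. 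But then $(g_1,\dots,g_n)=M_1(g_1,\dots,g_n)$ is precisely the generator matrix of a Gabidulin code of dimension $1$ in the sense of Definition~\ref{def:Gab}, so $\mathcal{C}$ is a Gabidulin code. (Note that for $k=1$ the generalization parameter $s$ plays no role, since there is only a single row, so this is a genuine Gabidulin code and not merely a generalized one.)

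For the case $k=n-1$ I would pass to the dual code. By Proposition~\ref{prop:dual1}, part~1, the dual $\mathcal{C}^{\perp}\subseteq\F_{q^m}^n$ of an MRD code $\mathcal{C}$ of dimension $n-1$ is again an MRD code, now of dimension $n-k=1$. By the case already settled, $\mathcal{C}^{\perp}$ is a Gabidulin code. Finally, Proposition~\ref{prop:dual1}, part~2, tells us that the dual of a Gabidulin code is a Gabidulin code; applying this to $\mathcal{C}^{\perp}$ and using $(\mathcal{C}^{\perp})^{\perp}=\mathcal{C}$ shows that $\mathcal{C}$ is a Gabidulin code, which completes the argument.

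I do not expect a genuine obstacle in this proof: the one-dimensional base case is essentially a restatement of the definitions, and the reduction at $k=n-1$ is a clean application of the two duality facts already recorded in Proposition~\ref{prop:dual1}. The only point that deserves a moment's care is the tacit hypothesis $n\leq m$, needed so that a vector of full rank $n$ exists in $\F_{q^m}^n$; this is the usual ambient assumption for MRD codes and is what makes the $k=1$ minimum-distance condition equivalent to $\F_q$-linear independence of the coordinates.
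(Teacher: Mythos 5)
Your proposal is correct and follows essentially the same route as the paper: the $k=1$ case is read off from the minimum rank distance forcing the generator vector's entries to be $\F_q$-linearly independent, and the $k=n-1$ case is obtained by dualizing via Proposition~\ref{prop:dual1}. Your version merely spells out the duality step (and the tacit $n\leq m$ assumption) a bit more explicitly than the paper does.
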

\begin{proof}
 Let $\mathcal{C}\subseteq \F_{q^m}^n$ be an MRD code of dimension $1$. Then the minimum rank distance is $n$ and it can be generated by one vector in $\F_{q^m}^n$. Clearly this vector needs to have only entries that are linearly independent over $\F_q$, thus it is a Gabidulin code.

Since the dual of a Gabidulin code is again a Gabidulin code (see Proposition \ref{prop:dual1}), the statement for codes of dimension $n-1$ follows.
\end{proof}

Then the following statement easily follows.

\begin{corollary}
 All linear MRD codes of length $n\in \{1,2,3\}$ are Gabidulin codes.
\end{corollary}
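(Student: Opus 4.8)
The plan is to reduce everything to Theorem \ref{thm:trivdim} by the elementary observation that for $n\in\{1,2,3\}$ the admissible dimensions are very few. A nonzero linear MRD code in $\F_{q^m}^n$ has some dimension $k$ with $1\le k\le n$, and for these small lengths one checks directly that every such $k$ lies in the set $\{1,\,n-1,\,n\}$: for $n=1$ we only have $k=1$; for $n=2$ we have $k\in\{1,2\}=\{1,n-1\}\cup\{n\}$; and for $n=3$ we have $k\in\{1,2,3\}=\{1,n-1,n\}$. Thus the corollary will follow once the three cases $k=1$, $k=n-1$, and $k=n$ are all known to force the Gabidulin property.

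The cases $k=1$ and $k=n-1$ are settled immediately by Theorem \ref{thm:trivdim}, so no further work is needed there. The only dimension not covered by that theorem is the full-space case $k=n$, which I would dispatch separately. Here $\mathcal{C}=\F_{q^m}^n$, whose minimum rank distance is $1=n-k+1$, so it is indeed MRD. To exhibit it as a Gabidulin code, I would choose any $g_1,\dots,g_n\in\F_{q^m}$ that are linearly independent over $\F_q$ (possible since the standing assumption $n\le m$ guarantees such elements exist) and form the Moore matrix $M_n(g_1,\dots,g_n)$. The $q$-Moore determinant of this matrix is nonzero precisely because the $g_i$ are $\F_q$-linearly independent, so $M_n(g_1,\dots,g_n)\in\GL_n(q^m)$ and its rows span all of $\F_{q^m}^n$. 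Hence the full space coincides with this Gabidulin code, completing the case $k=n$.

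I do not expect any genuine obstacle: the argument is essentially a finite case check feeding into Theorem \ref{thm:trivdim}. The only point requiring a moment's care is making sure that the degenerate top dimension $k=n$ is not silently omitted, since Theorem \ref{thm:trivdim} literally addresses only $k=1$ and $k=n-1$; once that trivial full-space case is handled as above, the enumeration $\{1,\dots,n\}\subseteq\{1,n-1,n\}$ closes the proof for each of $n=1,2,3$.
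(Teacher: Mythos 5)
Your proof is correct and follows essentially the same route the paper intends: the corollary is stated as an immediate consequence of Theorem \ref{thm:trivdim}, obtained by enumerating the possible dimensions $k\in\{1,\dots,n\}$ for $n\le 3$. Your explicit treatment of the full-space case $k=n$ (which Theorem \ref{thm:trivdim} does not literally cover) is a welcome bit of extra care that the paper leaves implicit.
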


The following observation is helpful for further investigations:

\begin{lemma}\label{lem:helpMRD2}
 Any MRD code $\mathcal{C}\subseteq \F_{q^m}^n$ of dimension $k$ has a generator matrix $G\in \F_{q^m}^{k\times n}$ in systematic form, i.e.\ 
\[ G = \left[\; I_k \mid * \; \right] .\]
Moreover, all entries of $*$ are from $\F_{q^m}\backslash \F_q$.
\end{lemma}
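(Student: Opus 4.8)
The plan is to prove the two assertions separately, both through the maximal-minor characterisation of MRD codes. First I would produce the systematic form. Starting from any generator matrix $G_0$ of $\mathcal{C}$, Lemma \ref{lem:helpMRD} guarantees that every maximal minor of $G_0$ is non-zero; in particular the $k\times k$ submatrix $B$ formed by the first $k$ columns has non-zero determinant and is therefore invertible over $\F_{q^m}$. Left-multiplying by $B^{-1}$ yields a new generator matrix $G = B^{-1}G_0 = [\,I_k \mid *\,]$, which is the desired systematic form. Since left multiplication by an invertible matrix over $\F_{q^m}$ is a row operation that does not change the code, $G$ still generates $\mathcal{C}$.

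For the second assertion I would argue by contradiction, using the full $\GL_n(q)$-invariance supplied by Theorem \ref{thm:mainMRD} rather than merely Lemma \ref{lem:helpMRD}. Suppose some entry of $*$, say the $(i,j)$-entry lying in column $c = k+j > k$, equals a scalar $\lambda \in \F_q$. Because column $i$ of $G$ is the standard basis vector $e_i$, and the column operation ``subtract $\lambda$ times column $i$ from column $c$'' uses a coefficient in $\F_q$, it is realised by right multiplication by the matrix $A = I_n - \lambda e_i e_c^T \in \GL_n(q)$; in fact $A \in \mathrm{UT}^*_n(q)$ because $i \le k < c$ places the off-diagonal entry strictly above the diagonal. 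In $GA$ only column $c$ changes: it becomes the old column $c$ minus $\lambda e_i$, so its $i$-th entry becomes $\lambda - \lambda = 0$, while columns $1,\dots,k$ remain the standard basis vectors $e_1,\dots,e_k$.

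The key step is then a short minor computation. Consider the maximal minor of $GA$ on the column set $\{1,\dots,k\}\setminus\{i\}\cup\{c\}$. The columns indexed by $\{1,\dots,k\}\setminus\{i\}$ are the standard basis vectors $e_\ell$ with $\ell\neq i$, so a cofactor expansion along these columns shows that the minor equals, up to sign, the $i$-th entry of the modified column $c$, namely $0$. Thus $GA$ possesses a vanishing maximal minor for some $A\in\GL_n(q)$, contradicting Theorem \ref{thm:mainMRD}. Hence no entry of $*$ lies in $\F_q$, i.e.\ every entry of $*$ belongs to $\F_{q^m}\setminus\F_q$. I expect no genuine obstacle here, only the bookkeeping of sign and index set in the cofactor expansion; and I would note that taking $\lambda = 0$ in the same argument simultaneously recovers the non-vanishing of every entry of $*$, which is consistent with $0 \in \F_q$.
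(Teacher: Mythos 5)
Your proof is correct. The first assertion is handled exactly as in the paper: Lemma \ref{lem:helpMRD} gives invertibility of the leading $k\times k$ block, and row reduction produces the systematic form. For the second assertion, however, you take a genuinely different route. The paper argues directly from the minimum rank distance: every nonzero codeword of an MRD code has rank $n-k+1$ over $\F_q$, so a row of $[\,I_k\mid *\,]$, which already has $k$ entries in $\F_q$, must have its remaining $n-k$ entries $\F_q$-linearly independent together with $1$, hence none of them can lie in $\F_q$. You instead stay entirely inside the maximal-minor framework of Theorem \ref{thm:mainMRD}: if the $(i,j)$-entry of $*$ were $\lambda\in\F_q$, the column operation $A=I_n-\lambda e_ie_c^T\in\mathrm{UT}^*_n(q)$ kills that entry, and the minor of $GA$ on the columns $(\{1,\dots,k\}\setminus\{i\})\cup\{c\}$ then vanishes by cofactor expansion along the standard basis columns --- contradicting the criterion. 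Both arguments are sound. The paper's is shorter and exposes the structural reason (each row, being a codeword, must have high rank); yours is a uniform application of the minor criterion that simultaneously recovers the non-vanishing of the entries via $\lambda=0$ and foreshadows exactly the kind of $\mathrm{UT}^*_n(q)$ computations used later in the constructions of Section \ref{sec:constructions}, at the cost of invoking the heavier Theorem \ref{thm:mainMRD} where the definition of minimum distance would suffice.
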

\begin{proof}
The first statement is a direct consequence of Lemma \ref{lem:helpMRD}
The second statement follows from the minimum rank distance $n-k+1$ of the code, because every codeword needs to have at least $n-k$ entries from $\F_{q^m}\backslash \F_q$.
\end{proof}

In the first case not covered by Theorem \ref{thm:trivdim}, i.e.\ for length $n=4$ and dimension $k=2$, we can get the following statement. The same observation is mentioned as a computational result in \cite[Section V]{sh15}.

\begin{proposition}\label{prop:q=2}
 All linear MRD codes in $\F_{2^4}^4$ are Gabidulin codes.
\end{proposition}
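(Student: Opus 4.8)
\emph{Reduction to the only open dimension.}
The plan is to isolate the single nontrivial dimension and then invoke the Gabidulin criterion of Theorem~\ref{thm:mainGab} with $s=1$. By Theorem~\ref{thm:trivdim} every MRD code of dimension $1$ or $3$ in $\F_{2^4}^4$ is Gabidulin, and $k\in\{0,4\}$ is trivial, so it remains to treat $k=2$. For such a code $\mathcal{C}$ we have $d=n-k+1=3$, and since $\gcd(1,4)=1$, Theorem~\ref{thm:mainGab} says that $\mathcal{C}$ is a generalized Gabidulin code exactly when $\dim(\mathcal{C}\cap\mathcal{C}^{[1]})=k-1=1$; as $s=1$ this is an ordinary Gabidulin code. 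Hence the whole proposition reduces to the claim that $\dim(\mathcal{C}\cap\mathcal{C}^{[1]})=1$ for every $2$-dimensional MRD code $\mathcal{C}\subseteq\F_{2^4}^4$.

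\emph{Upper bound and a determinantal reformulation.}
First I would put $\mathcal{C}$ in systematic form $G=[\,I_2\mid X\,]$ with $X=\bigl(\begin{smallmatrix}a&b\\c&d\end{smallmatrix}\bigr)$ and $a,b,c,d\in\F_{2^4}\setminus\F_2$ (Lemma~\ref{lem:helpMRD2}). Matching the first two coordinates of an element of $\mathcal{C}\cap\mathcal{C}^{[1]}$ shows that this intersection is (the image of) the kernel of
\[N=\begin{pmatrix} a-a^{[1]} & c-c^{[1]}\\ b-b^{[1]} & d-d^{[1]}\end{pmatrix},\]
so $\dim(\mathcal{C}\cap\mathcal{C}^{[1]})=2-\rk_{q^m}(N)$. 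Since $a\notin\F_2$ forces $a-a^{[1]}\neq0$, we have $N\neq0$ and the dimension is at most $1$. It therefore suffices to rule out the value $0$, that is, to prove the identity $(a-a^{[1]})(d-d^{[1]})=(b-b^{[1]})(c-c^{[1]})$.

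\emph{A clean auxiliary fact, and why $[1]$ is the hard case.}
A useful structural observation is that $\mathcal{C}\cap\mathcal{C}^{[2]}=\{0\}$ for every such code. Indeed $\boldsymbol{x}\mapsto\boldsymbol{x}^{[2]}$ maps $\mathcal{W}:=\mathcal{C}\cap\mathcal{C}^{[2]}$ to $\mathcal{C}^{[2]}\cap\mathcal{C}^{[4]}=\mathcal{W}$ and is a semilinear involution there (as $\boldsymbol{x}^{[4]}=\boldsymbol{x}$). If $\mathcal{W}\neq\{0\}$, then $\boldsymbol{p}:=\boldsymbol{y}+\boldsymbol{y}^{[2]}$ for a suitable $\boldsymbol{y}\in\mathcal{W}$ (or any nonzero $\boldsymbol{y}\in\mathcal{W}$ if the involution is the identity) is a nonzero vector of $\mathcal{C}$ fixed by $[2]$, hence with all entries in $\F_{2^2}$; such a vector has rank at most $2$ over $\F_2$, contradicting $d=3$. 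The same trick fails for $[1]$ precisely because $\mathcal{C}\cap\mathcal{C}^{[1]}$ is \emph{not} stable under $[1]$, which is exactly why $\dim(\mathcal{C}\cap\mathcal{C}^{[1]})\geq1$ is the genuine obstacle.

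\emph{Main obstacle and how I would discharge it.}
The hard part is the displayed determinantal identity, and I do not expect a short forcing of it from the rank conditions. I would settle it by a finite verification organised through the isometry group: modulo the action of $\F_{2^4}^*$, of the Frobenius, and of the $\GL_4(2)$ action sending $X\mapsto(P+XR)^{-1}(Q+XS)$ (for the block decomposition of the column transformation), the matrix $X$ runs over only a few normal forms, and for each one checks, via Corollary~\ref{cor:mainMRD}, that the MRD condition forces $\det N=0$. Over $\F_{2^4}$ this is a finite computation and recovers the observation of \cite[Section~V]{sh15}. Equivalently, one can argue by counting: a $2$-dimensional Gabidulin code is the row space of a Moore matrix $M_2(g_1,\dots,g_4)$ with $g_1,\dots,g_4$ an $\F_2$-basis of $\F_{2^4}$, and using $\mathcal{C}\cap\mathcal{C}^{[2]}=\{0\}$ one checks that two such bases give the same code if and only if they differ by an $\F_{2^4}^*$-scalar; hence there are exactly $|\GL_4(2)|/(2^4-1)=20160/15=1344$ Gabidulin codes. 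Verifying (again a finite check) that $\F_{2^4}^4$ carries exactly $1344$ two-dimensional MRD codes then forces every MRD code to be Gabidulin.
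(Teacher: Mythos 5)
Your proposal is correct and follows essentially the same route as the paper: reduce to $k=2$ via Theorem~\ref{thm:trivdim}, pass to the systematic form $[\,I_2\mid X\,]$ of Lemma~\ref{lem:helpMRD2}, translate the Gabidulin property through Theorem~\ref{thm:mainGab} with $s=1$ into the single determinantal condition $(a^2+a)(d^2+d)+(b^2+b)(c^2+c)=0$, and settle that condition against the MRD constraints by a finite computer verification. The paper organizes the final check as an explicit system of polynomial inequations from Corollary~\ref{cor:mainMRD} with no solution over $\F_{2^4}\setminus\F_2$, whereas you propose normal forms under the isometry group or a count of the $1344$ Gabidulin codes; these are just different bookkeeping for the same exhaustive verification, and your extra observation that $\mathcal{C}\cap\mathcal{C}^{[2]}=\{0\}$ is correct but not load-bearing.
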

\begin{proof}
 The case for codes of dimension $k=1$ or $k=3$ follows from Theorem \ref{thm:trivdim}. It remains to show the case $k=2$. Then by Lemma \ref{lem:helpMRD2} there exists a generator matrix of the form
\[G=\left( \begin{array}{cccc} 1&0&a&b\\ 0&1&c&d
           \end{array}
 \right)\]
with $a,b,c,d \in \F_{2^4}\backslash \F_2$. 

By Theorem \ref{thm:mainMRD} a generator matrix $G$ of an MRD code satisfies 
\[ G \left( \begin{array}{cccc} 1&u_1&u_2&u_3\\ 0&1&u_4&u_5 \\ 0&0&1&u_6 \\ 0&0&0&1
           \end{array}
 \right)  =  \left( \begin{array}{cccc} 1&u_1&u_2+a&u_3+au_6+b\\ 0&1&u_4+c&u_5+cu_6+d
           \end{array}
 \right)\]
needs to have only non-zero maximal minors for $u_1,\dots,u_6 \in \F_2$. Thus we get the following inequations:
\begin{align*}
1&\neq 0\\
 u_4+c &\neq 0 \\
u_5+cu_6 +d &\neq 0 \\
(u_2+a) + u_1(u_4+c) &\neq 0 \\
(u_3+au_6+b) + u_1(u_5+cu_6+d)&\neq 0\\
(u_2+a)(u_5+cu_6+d) + (u_4+c)(u_3+ au_6+b) &\neq 0   .
\end{align*}
Clearly the first inequation is always true; the same for the second, since $u_4\in \F_2$ and $c\notin \F_2$.

If $G$ does not generate a Gabidulin code then, by Theorem \ref{thm:mainGab}, 
\[\rk \left[\begin{array}{cccc}
             1&0&a&b \\
	    0&1&c&d\\
 1&0&a^2&b^2 \\
	    0&1&c^2&d^2
            \end{array}
\right] \neq 3 .\]
Since $a,b,c,d \not \in \F_2$ the rank of the above matrix is at least $3$. Thus we need that the rank is equal to $4$, which is equivalent to 
\[(a^2+a)(d^2+d) + (b^2+b)(c^2+c) \neq 0 .\]

Thus, overall, we need to check that there is no solution to the system of inequations
\begin{align*}
u_5+cu_6 +d \neq 0 \\
(u_2+a) + u_1(u_4+c) \neq 0 \\
(u_3+au_6+b) + u_1(u_5+cu_6+d)\neq 0\\
(u_2+a)(u_5+cu_6+d) + (u_4+c)(u_3+ au_6+b) \neq 0 \\
(a^2+a)(d^2+d) + (b^2+b)(c^2+c) \neq 0 
\end{align*}
for any $u_1,\dots,u_6\in\F_2$. 
With the help of a computer program one can check that there exists no solution for $a,b,c,d \in \F_{2^4}\backslash \F_2$ for the above system of inequations, for any representation of the extension field.
\end{proof}

The previous results show that the first set of parameters for which we can hope to construct non-Gabidulin MRD codes is $n=4, k=2$ and $q\geq 3$. This is what we will do in the following subsection.


\subsection{Constructions of length $4$ and dimension $2$}

In this subsection we use the results of the previous sections to derive some linear MRD codes that are not generalized Gabidulin codes. The codes that we derive in this subsection have length $4$ and dimension $2$.

\begin{theorem}\label{thm:construction}
 Let $m>4$,  $\alpha \in \F_{q^m}$ primitive such that $\F_{q^m}^* = \langle \alpha \rangle$ and $\gamma \in \F_{q}$ be 
 a quadratic non-residue in $\F_q$ 
 such that $\gamma \neq  (\alpha^{[s]}+\alpha)^2$ for any $0<s<m$ with $\gcd (s,m)=1$. Then
\[G = \left( \begin{array}{cccc}
             1&0&\alpha&\alpha^2 \\
	    0&1&\alpha^2 & \gamma \alpha
            \end{array}\right)
\]
is a generator matrix of an MRD code $\mathcal{C}\subseteq \F_{q^m}^4$ of dimension $k=2$ that is not a generalized Gabidulin code.
\end{theorem}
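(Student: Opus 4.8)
The plan is to verify the two required properties separately using the generator-matrix criteria developed earlier: Corollary~\ref{cor:mainMRD} for the MRD property and Theorem~\ref{thm:mainGab} for the non-Gabidulin property. Throughout I will exploit that $\alpha$ is primitive, so its minimal polynomial over $\F_q$ has degree $m$; consequently no nonzero polynomial in $\alpha$ of degree at most $m-1$ with coefficients in $\F_q$ can vanish, and in particular none of degree at most $4$, since $m>4$.

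For the MRD property, by Corollary~\ref{cor:mainMRD} it suffices to show that every maximal ($2\times 2$) minor of $GA$ is nonzero for each $A\in\mathrm{UT}^*_4(q)$. Writing $A$ with free entries $u_1,\dots,u_6\in\F_q$ and forming $GA$ exactly as in the proof of Proposition~\ref{prop:q=2}, each of the six maximal minors becomes an explicit polynomial in $\alpha$ with coefficients in $\F_q$ (depending on the $u_i$). Four of these minors are immediately nonzero by the degree argument above: whenever the leading coefficient in $\alpha$ is a nonzero scalar the minor has degree between $1$ and $4$, and when a leading coefficient happens to vanish for some choice of $u_i$ the minor drops to a lower but still nonzero degree (here one uses $\gamma\neq 0$ and $\alpha\notin\F_q$). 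The one delicate minor is the one formed by columns $2$ and $4$: it equals $(u_1u_6-1)\alpha^2+(u_1\gamma-u_6)\alpha+(u_1u_5-u_3)$, and its leading coefficient $u_1u_6-1$ can be forced to zero over $\F_q$. This is exactly where the hypothesis that $\gamma$ is a quadratic non-residue enters: if both $u_1u_6-1=0$ and $u_1\gamma-u_6=0$, then eliminating $u_6$ yields $\gamma=u_1^{-2}$, contradicting non-residuality; hence the coefficient of $\alpha$ cannot also vanish and the minor remains a nonzero degree-$1$ polynomial. This establishes that $\mathcal C$ is MRD.

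For the non-Gabidulin property I will apply Theorem~\ref{thm:mainGab}: $\mathcal C$ is a generalized Gabidulin code (necessarily with some generalization parameter $s$) precisely when $\dim(\mathcal C\cap\mathcal C^{[s]})=k-1=1$ for that $s$, so it suffices to rule this out for every $s$ with $0<s<m$ and $\gcd(s,m)=1$. Since $\mathcal C$ is MRD of minimum distance $3$ it contains no rank-$1$ codeword, so by Lemma~\ref{lem:main} one cannot have $\mathcal C=\mathcal C^{[s]}$; thus $\dim(\mathcal C\cap\mathcal C^{[s]})\le 1$, and this dimension equals $1$ exactly when the stacked matrix $\bigl(\begin{smallmatrix}G\\ G^{[s]}\end{smallmatrix}\bigr)$ is singular. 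Using $\gamma^{[s]}=\gamma$ (as $\gamma\in\F_q$) and $(\alpha^j)^{[s]}=(\alpha^{[s]})^j$, a short row reduction (subtract the first two rows from the last two) collapses this $4\times4$ determinant to a $2\times2$ one and gives
\[\det\begin{pmatrix}G\\ G^{[s]}\end{pmatrix}=(\alpha^{[s]}-\alpha)^2\bigl[\gamma-(\alpha^{[s]}+\alpha)^2\bigr].\]
Because $\gcd(s,m)=1$ and $\alpha\notin\F_q$, we have $\alpha^{[s]}\neq\alpha$, so the first factor is nonzero; and the standing hypothesis $\gamma\neq(\alpha^{[s]}+\alpha)^2$ makes the second factor nonzero. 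Hence the determinant never vanishes, so $\dim(\mathcal C\cap\mathcal C^{[s]})=0$ for all admissible $s$, and $\mathcal C$ is not a generalized Gabidulin code.

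I expect the main obstacle to be the MRD verification rather than the non-Gabidulin step, since it requires organizing all six minors and isolating the single configuration (columns $2$ and $4$ with $u_1u_6=1$) in which the degree argument fails and the non-residue hypothesis is genuinely needed; the remaining book-keeping — checking that each leading coefficient, when it vanishes, leaves behind a nonzero lower-degree term — is routine but must be carried out case by case. The non-Gabidulin part is comparatively clean once the $4\times 4$ determinant is factored, the only subtlety being to quantify over all valid $s$ rather than a single fixed one.
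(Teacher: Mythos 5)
Your proposal is correct and follows essentially the same route as the paper: Corollary~\ref{cor:mainMRD} with the six explicit minors of $G\,\mathrm{UT}^*_4(q)$ (degree-in-$\alpha$ arguments for five of them, the quadratic non-residue hypothesis for the columns-$2$,$4$ minor, where your $\gamma=u_1^{-2}$ is the paper's $\gamma=u_6^2$), and Theorem~\ref{thm:mainGab} with the same factorization $(\alpha^{[s]}-\alpha)^2\bigl[\gamma-(\alpha^{[s]}+\alpha)^2\bigr]$ of the stacked determinant, quantified over all admissible $s$. No gaps.
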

\begin{proof}
 First we prove that $\mathcal{C}$ is MRD. For this we use Corollary \ref{cor:mainMRD}. Note that 
\[\mathrm{UT^*}_4(q) = \left\{
\left( \begin{array}{cccc} 1&u_1&u_{2}&u_{3} \\ 0&1&u_{4}&u_{5 }\\ 0&0&1&u_{6} \\ 0&0&0&1
       \end{array}
 \right) \bigg \vert \;  u_{1},\dots,u_{6} \in \F_q
 \right\}\]
and
\[G \left( \begin{array}{cccc} 1&u_1&u_{2}&u_{3}\\ 0&1&u_{4}&u_{5} \\ 0&0&1&u_{6} \\ 0&0&0&1
       \end{array}
 \right)  = \left( \begin{array}{cccc} 1&u_1&u_{2}+\alpha & u_{3}+u_{6}\alpha+ \alpha^2 \\ 0&1&u_{4}+ \alpha^2 &  u_{5}+u_{6}\alpha^2+\gamma\alpha   \end{array}
 \right).\]
We need to show that all maximal minors of this matrix are non-zero for any values of $u_{1},\dots,u_{6}$:
\begin{align*}
1&\neq 0\\
 u_4+\alpha^2 &\neq 0 \\
u_5+\alpha^2 u_6 +\gamma\alpha &\neq 0 \\
(u_2+\alpha) - u_1(u_4+\alpha^2) &\neq 0 \\
(u_3+\alpha u_6+\alpha^2) - u_1(u_5+\alpha^2 u_6+\gamma \alpha)  &\neq 0\\
(u_2+\alpha)(u_5+\alpha^2 u_6+\gamma \alpha) - (u_4+\alpha^2)(u_3+ \alpha u_6+\alpha^2) &\neq 0 .
\end{align*}
One can easily see that the first four inequations are always true, since all $u_i$ are in $\F_q$. We can rewrite the fifth inequation as
\[(u_1 u_5 - u_3) + (u_1 \gamma -u_6)\alpha + (u_1u_6 -1)\alpha^2 \neq 0 .\]
If the last term is zero then $u_1 = u_6^{-1}$. But then $u_1 \gamma - u_6 = u_6^{-1} (\gamma - u_6^2) \neq 0$ because $\gamma$ is a quadratic non-residue. Thus, in this case, the middle term of the above sum does not vanish, i.e.\ the inequation is always true. 
Lastly we can rewrite the sixth inequation as 
\[ (u_{2}u_{5}-u_{3}u_{4}) + (u_{2}\gamma + u_{5} -u_{4}u_{6})\alpha + (u_{2}u_{6}+\gamma-u_{4}-u_{3})\alpha^2 - \alpha^4 \neq 0 .\]
This is always true, since the minimal polynomial of $\alpha$ has degree $m>4$ and $u_{1},\dots, u_{6},\gamma \in \F_q$, i.e.\ nothing can cancel out the $\alpha^4$-term.

It remains to prove that $\mathcal{C}$ is not a generalized Gabidulin code. For this we use Theorem \ref{thm:mainGab} and compute
\[\rk\left[\begin{array}{c} G\\G^{[s]}   \end{array}\right] =
\rk \left[\begin{array}{cccc}
             1&0&\alpha&\alpha^2 \\
	    0&1&\alpha^2 & \gamma \alpha\\
 1&0&\alpha^{[s]}&\alpha^{2[s]} \\
	    0&1&\alpha^{2[s]} & \gamma \alpha^{[s]}
            \end{array}
\right] =
\]
\[
\rk \left[\begin{array}{cccc}
             1&0&\alpha&\alpha^2 \\
	    0&1&\alpha^2 & \gamma \alpha\\
 0&0&\alpha^{[s]}-\alpha&\alpha^{2[s]}-\alpha^2 \\
	    0&0&\alpha^{2[s]}-\alpha^2 & \gamma (\alpha^{[s]} - \alpha)
            \end{array}
\right] ,
\]
for any $s$ with $\gcd(s,m)=1$. 
Since $\alpha \not \in \F_{q}$ this rank cannot be equal to $2$, by Lemma \ref{lem:help}. Hence, $\mathcal{C}$ is Gabidulin if and only if the determinant of the lower right submatrix from above is zero, i.e.\ if and only if
\[\gamma (\alpha^{[s]}-\alpha)^2 - (\alpha^{2[s]}-\alpha^2)^2 =0   \]
\[\iff \gamma (\alpha^{[s]}-\alpha)^2  = (\alpha^{2[s]}-\alpha^2)^2\]
\[\iff \gamma (\alpha^{[s]}-\alpha)^2  = (\alpha^{[s]}-\alpha)^2(\alpha^{[s]}+\alpha)^2\]
\[\iff \gamma   = (\alpha^{[s]}+\alpha)^2 .\]
This is a contradiction to the conditions on $\gamma$, 
which implies that $\mathcal{C}$ is non-Gabidulin.
\end{proof}

Note that in the previous theorem $\gamma \in \F_{q}$ can in particular be chosen as a quadratic non-residue in the extension field $\F_{q^{m}}$.

\begin{example}
 Let $q=3, m=5$ and $\alpha$ a root of $x^5 + 2x^2+ x + 1$.  Then $\gamma = 2$ is a non-quadratic residue in $\F_{3^5}$ and the code with generator matrix
\[G = \left( \begin{array}{cccc}
             1&0&\alpha&\alpha^2 \\
	    0&1&\alpha^2 & 2 \alpha
            \end{array}\right)
\]
is an MRD but not a generalized Gabidulin code. 
\end{example}

Although we proved Theorem \ref{thm:construction} for $m>4$ we can find analog constructions for $m=4$, as shown in the following examples. The proof that these examples are also non-Gabidulin MRD codes is analogous to the one of Theorem \ref{thm:construction}, but when checking if the maximal minor of $G\mathrm{UT}^{*}_{n}(q)$ involving the third and fourth column is non-zero we cannot use the argument that the minimal polynomial $m(x)$ of $\alpha$ has degree at least $4$.  Instead we need to write $\alpha^{4}$ modulo $m(x)$ and show that the minor is non-zero. 

\begin{example}\label{ex1}
Let $q=3, m=4$ and $\alpha$ a root of $x^4 -x^3 - 1$.  
Then $\gamma=2$ is a quadratic non-residue in $\F_3$ and it fulfills the conditions that $\gamma \neq  (\alpha^{[s]}+\alpha)^2$ for any $0<s<m$ with $\gcd (s,m)=1$. 
Now the code with generator matrix
\[G = \left( \begin{array}{cccc}
             1&0&\alpha&\alpha^2 \\
	    0&1&\alpha^2 & 2 \alpha
            \end{array}\right)
\]
is an MRD but not a generalized Gabidulin code.
To show that it is an MRD code we need to prove that the before mentioned minor is non-zero, i.e.\ that
\[(u_{2}u_{5}-u_{3}u_{4}) + (2u_{2}   + u_{5} -u_{4}u_{6})\alpha + (u_{2}u_{6}+ 2 -u_{4}-u_{3})\alpha^2 - \alpha^4\]
\[\iff (u_{2}u_{5}-u_{3}u_{4}-1) + (2u_{2}   + u_{5} -u_{4}u_{6})\alpha + (u_{2}u_{6}+ 2 -u_{4}-u_{3})\alpha^2 - \alpha^{3}  \]
is non-zero for any $u_{1},\dots, u_{6}\in \F_{q}$. This is clearly the case since nothing can cancel out the $\alpha^{3}$-term.
%
\end{example}

Note that in the previous example we could have chosen any minimal polynomial of $\alpha$ that involves a non-zero term of order $3$ (and a suitable $\gamma$). The same proof would then show that the generated code is MRD but not a generalized Gabidulin code.

We want to conclude with a final example over $\F_{5}$. A generalization for other values of $q$ is then straight-forward.
\begin{example}
Let $q=5, m=4$ and $\alpha$ a root of $x^4 +x^3+x^{2}+x+3$.   
Then $\gamma=2$ is a quadratic non-residue in $\F_5$ and it fulfills the conditions that $\gamma \neq  (\alpha^{[s]}+\alpha)^2$ for any $0<s<m$ with $\gcd (s,m)=1$. 
Now the code with generator matrix
\[G = \left( \begin{array}{cccc}
             1&0&\alpha&\alpha^2 \\
	    0&1&\alpha^2 & 2 \alpha
            \end{array}\right)
\]
is an MRD but not a generalized Gabidulin code.
To show that it is an MRD code we need to prove that the before mentioned minor is non-zero, i.e.\ that
\[(u_{2}u_{5}-u_{3}u_{4}) + (u_{2}\gamma + u_{5} -u_{4}u_{6})\alpha + (u_{2}u_{6}+\gamma-u_{4}-u_{3})\alpha^2 - \alpha^4\]
\[\iff (u_{2}u_{5}-u_{3}u_{4}+2) + (u_{2}\gamma + u_{5} -u_{4}u_{6}+4)\alpha + (u_{2}u_{6}+\gamma-u_{4}-u_{3}+4)\alpha^2 + 4\alpha^{3}  \]
is non-zero for any $u_{1},\dots, u_{6}\in \F_{q}$. This is clearly the case since nothing can cancel out the $4\alpha^{3}$-term.
%
\end{example}



\subsection{Construction of length $5$ and dimension $2$}

Analogously to the previous subsection, we present a construction of linear MRD codes of length $5$ and dimension $2$ that are not generalized Gabidulin codes. 

\begin{theorem}\label{thm:construction5}
 Let $m>7$,  $\alpha \in \F_{q^m}$ primitive such that $\F_{q^m}^* = \langle \alpha \rangle$ and $\gamma \in \F_{q}$ be 
 such that $\gamma \neq  (\alpha^{[s]}+\alpha)(\alpha^{2[s]} + \alpha^{[s]+1} +\alpha^2)$ for any $0<s<m$ with $\gcd (s,m)=1$. Then
\[G = \left( \begin{array}{ccccc}
             1&0&\alpha&\alpha^2&\alpha^3 \\
	    0&1&\alpha^2 &\alpha^4& \gamma \alpha
            \end{array}\right)
\]
is a generator matrix of an MRD code $\mathcal{C}\subseteq \F_{q^m}^5$ of dimension $k=2$ that is not a generalized Gabidulin code.
\end{theorem}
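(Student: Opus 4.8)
The plan is to follow the blueprint of Theorem~\ref{thm:construction} verbatim, only with $n=5$: I would establish the MRD property through Corollary~\ref{cor:mainMRD} and the non-Gabidulin property through Theorem~\ref{thm:mainGab}. The two new features are that a generic $A\in\mathrm{UT}^*_5(q)$ now carries ten free parameters $u_1,\dots,u_{10}\in\F_q$ instead of six, and that the maximal minors of $GA$ reach degree $7$ in $\alpha$ rather than $4$, which is precisely what the hypothesis $m>7$ is there to absorb.

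For the MRD half I would form $GA$, a $2\times5$ matrix whose entries are $\F_q$-polynomials in $\alpha$, and check that each of the $\binom{5}{2}=10$ maximal ($2\times2$) minors is non-zero. The guiding principle is that every such minor is a polynomial $p(\alpha)$ over $\F_q$ of degree at most $7$; since $\alpha$ is primitive and $m>7$, its minimal polynomial has degree $m>7$, so $p(\alpha)=0$ would force $p$ to be the zero polynomial, and it therefore suffices to exhibit one non-vanishing coefficient per minor. For seven of the ten minors this is immediate, as they visibly keep a $\pm1$ coefficient at some degree (the constant term for columns $1,2$; leading coefficients of degree $2,4,5,7$ for columns $1,3$, $1,4$, $3,4$, $4,5$; and the $\alpha$- resp.\ $\alpha^3$-coefficient for columns $2,3$ resp.\ $2,5$). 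In particular the minor on columns $4,5$ carries an $\alpha^7$-term with coefficient $-1$, which is exactly the obstruction pinning down $m>7$. The three remaining minors need a short case analysis: on columns $2,4$ the $\alpha^4$- and $\alpha^2$-coefficients are $u_1$ and $u_1u_8-1$, which cannot vanish simultaneously; and the two minors involving the last column, namely columns $1,5$ and columns $3,5$, are where $\gamma$ enters — tracing their coefficients shows each vanishes identically only if $\gamma=0$. I would therefore record the standing assumption $\gamma\neq0$ explicitly (the analogue of $\gamma$ being a nonzero quadratic non-residue in Theorem~\ref{thm:construction}); it is indispensable, since for $\gamma=0$ the word $(0,1,\alpha^2,\alpha^4,0)$ has rank $3$ and the code fails to be MRD.

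For the non-Gabidulin half I would form $\left[\begin{smallmatrix}G\\G^{[s]}\end{smallmatrix}\right]$ for an arbitrary $s$ with $0<s<m$ and $\gcd(s,m)=1$, subtract the top two rows from the bottom two (using $\gamma\in\F_q$, so $\gamma^{[s]}=\gamma$), and reduce to the $2\times3$ block
\[B=\begin{pmatrix}\alpha^{[s]}-\alpha & \alpha^{2[s]}-\alpha^2 & \alpha^{3[s]}-\alpha^3\\ \alpha^{2[s]}-\alpha^2 & \alpha^{4[s]}-\alpha^4 & \gamma(\alpha^{[s]}-\alpha)\end{pmatrix}.\]
Because $\alpha\notin\F_q$, Lemma~\ref{lem:help} gives $\alpha^{[s]}\neq\alpha$, so $\rk\left[\begin{smallmatrix}G\\G^{[s]}\end{smallmatrix}\right]\geq3$, and by Theorem~\ref{thm:mainGab} the code is a generalized Gabidulin code with parameter $s$ exactly when this rank equals $3$, i.e.\ when $B$ has rank $1$. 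The decisive computation is the minor of $B$ on its first and third columns, which, using $\alpha^{2[s]}-\alpha^2=(\alpha^{[s]}-\alpha)(\alpha^{[s]}+\alpha)$ and $\alpha^{3[s]}-\alpha^3=(\alpha^{[s]}-\alpha)(\alpha^{2[s]}+\alpha^{[s]+1}+\alpha^2)$, factors as
\[(\alpha^{[s]}-\alpha)^2\bigl[\gamma-(\alpha^{[s]}+\alpha)(\alpha^{2[s]}+\alpha^{[s]+1}+\alpha^2)\bigr].\]
By hypothesis the bracket is non-zero for every admissible $s$, so this minor is non-zero, $B$ has rank $2$, the full rank is $4$, and $\dim(\mathcal{C}\cap\mathcal{C}^{[s]})=0\neq k-1$ for all such $s$; Theorem~\ref{thm:mainGab} then excludes every generalization parameter, proving $\mathcal{C}$ is not a generalized Gabidulin code.

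The main obstacle is bookkeeping rather than conceptual: the MRD direction requires a careful, complete enumeration of the ten minors and honest verification that the three borderline cases (columns $2,4$; $1,5$; $3,5$) genuinely keep a non-vanishing coefficient, together with the two observations that the $\alpha^7$-coefficient of the columns-$4,5$ minor forces $m>7$ and that $\gamma\neq0$ must be assumed. The non-Gabidulin direction is clean once one chooses the correct minor of $B$ and spots the two factorizations above; the only subtlety there is to note that ruling out \emph{every} admissible $s$ is what is needed to exclude all generalization parameters at once.
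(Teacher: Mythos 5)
Your proposal is correct and follows essentially the same route as the paper: the MRD property via Corollary~\ref{cor:mainMRD} and a degree-by-degree analysis of the ten maximal minors of $GA$ (with the $\alpha^7$-term of the columns-$4,5$ minor accounting for $m>7$), and the non-Gabidulin property via Theorem~\ref{thm:mainGab} together with the factorization of the columns-$3,5$ minor of the reduced $2\times 3$ block. Your explicit flagging of $\gamma\neq 0$ as a necessary standing hypothesis is a correct sharpening: the paper's proof also uses $\gamma\neq 0$ when treating the minors involving the last column but omits it from the theorem statement, and your codeword $(0,1,\alpha^2,\alpha^4,0)$ of rank $3$ shows the conclusion genuinely fails for $\gamma=0$.
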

\begin{proof}
 First we prove that $\mathcal{C}$ is MRD. For this we use Corollary \ref{cor:mainMRD}. Note that 
\[\mathrm{UT^*}_5(q) = \left\{
\left( \begin{array}{ccccc} 1&u_1&u_{2}&u_{3}&u_{4} \\ 0&1&u_{5}&u_{6 }&u_{7}\\ 0&0&1&u_{8}&u_{9} \\ 0&0&0&1&u_{10}\\0&0&0&0&1
       \end{array}
 \right) \bigg \vert \;  u_{1},\dots,u_{10} \in \F_q
 \right\}\]
and
\[G \left( \begin{array}{ccccc} 1&u_1&u_{2}&u_{3}&u_{4} \\ 0&1&u_{5}&u_{6 }&u_{7}\\ 0&0&1&u_{8}&u_{9} \\ 0&0&0&1&u_{10}\\0&0&0&0&1
       \end{array}
 \right) = \]\[
 \left( \begin{array}{ccccc} 1&u_1&u_{2}+\alpha & u_{3}+u_{8}\alpha+ \alpha^2 & u_4+u_9\alpha+u_{10}\alpha^2 +\alpha^3 \\ 0&1&u_{5}+ \alpha^2 &  u_{6}+u_{8}\alpha^2+\alpha^4 & u_7+u_9 \alpha^2 +u_{10}\alpha^4 + \gamma \alpha   \end{array}
 \right).\]
We need to show that all maximal minors of this matrix are non-zero for any values of $u_{1},\dots,u_{10}$. Analogously to the proof of Theorem \ref{thm:construction}, one can easily see that the minors involving the first column are non-zero. The same holds for the minor involving the second and third column. The following equations remain:
\begin{eqnarray}
(u_3+\alpha u_8+\alpha^2) - u_1(u_6+\alpha^2 u_8+\alpha^4)  &\neq 0\\
(u_4+\alpha u_9+\alpha^2 u_{10} + \alpha^3) - u_1(u_7+\alpha^2 u_9+\alpha^4 u_{10}+ \gamma \alpha)  &\neq 0\\
(u_2+\alpha)(u_6+\alpha^2 u_8+ \alpha^4 u_{10} + \alpha^4) - (u_5+\alpha^2)(u_3+ \alpha u_8+\alpha^2) &\neq 0 \\
\nonumber(u_2+\alpha)(u_7+\alpha^2 u_9+ \alpha^4 u_{10} + \gamma \alpha) -\\  (u_5+\alpha^2)(u_4+ \alpha u_9+\alpha^2 u_{10} + \alpha^3) &\neq 0 \\
\nonumber( u_{3}+u_{8}\alpha+ \alpha^2)(u_7+u_9 \alpha^2 +u_{10}\alpha^4 + \gamma \alpha ) -\\ (u_4+u_9\alpha+u_{10}\alpha^2 +\alpha^3)( u_{6}+u_{8}\alpha^2+\alpha^4) &\neq 0
\end{eqnarray}
We can rewrite Inequation (1) as 
$$ (u_3-u_1 u_6) + u_8 \alpha + (1-u_1 u_8)\alpha^2 - u_1 \alpha^4\neq 0 .$$
The $\alpha^2$-term only vanishes if $u_1 u_8 =1$, but then the $\alpha$-term (and the $\alpha^4$-term) do not vanish. Hence, this inequation is always true. Inequation (2) has an $\alpha^3$-term that never vanishes, thus it is also true. Similarly, Inequation (3) has an $\alpha^5$-term that never vanishes, and Inequation (7) has an $\alpha^7$-term, that never vanishes. These two inequations are therefore also true. We can rewrite Inequation (4) as
$$ (u_2u_7 - u_4 u_5) + (u_2 \gamma + u_7 -u_5 u_9) \alpha + (u_2 u_9 +\gamma - u_5 u_{10} -u_4) \alpha^2  -u_5 \alpha^3 $$
$$+ (u_2 u_{10} -u_{10}) \alpha^4 + (u_{10} -1) \alpha^5 \neq 0 .$$
For the $\alpha^5$-term to vanish we need $u_{10}=1$, for the $\alpha^3$-term to vanish we need $u_{5}=0$. If additionaly we want the $\alpha^4$-term to vanish we need $u_{2}=1$. Then we need $u_7=0$ for the first summand to be zero. But then the $\alpha$-term does not vanish, since $\gamma\neq 0$. Thus this inequation is also true. Therefore we have shown that $\mathcal C $ is an MRD code.

It remains to prove that $\mathcal{C}$ is not a generalized Gabidulin code. For this we use Theorem \ref{thm:mainGab} and compute
\[\rk\left[\begin{array}{c} G\\G^{[s]}   \end{array}\right] =
\rk \left[\begin{array}{ccccc}
             1&0&\alpha&\alpha^2 &\alpha^3 \\
	    0&1&\alpha^2 &\alpha^4& \gamma \alpha\\
 1&0&\alpha^{[s]}&\alpha^{2[s]} &\alpha^{3[s]} \\
	    0&1&\alpha^{2[s]} &\alpha^{4[s]} & \gamma \alpha^{[s]}
            \end{array}
\right] =
\]
\begin{equation}\label{eq1}
\rk \left[\begin{array}{ccccc}
             1&0&\alpha&\alpha^2&\alpha^3 \\
	    0&1&\alpha^2 &\alpha^4& \gamma \alpha\\
 0&0&\alpha^{[s]}-\alpha&\alpha^{2[s]}-\alpha^2&\alpha^{3[s]}-\alpha^3 \\
	    0&0&\alpha^{2[s]}-\alpha^2 &\alpha^{4[s]}-\alpha^4& \gamma (\alpha^{[s]} - \alpha)
            \end{array}
\right] ,
\end{equation}
for any $s$ with $\gcd(s,m)=1$. 
Since $\alpha \not \in \F_{q}$ this rank cannot be equal to $2$, by Lemma \ref{lem:help}. Hence, $\mathcal{C}$ is Gabidulin if and only if the rank of the matrix in \eqref{eq1} is equal to $3$. We compute the determinant of the lower submatrix involving columns $3$ and $5$,
\[\gamma (\alpha^{[s]}-\alpha)^2 - (\alpha^{2[s]}-\alpha^2)(\alpha^{3[s]}-\alpha^3)  =  \]
\[(\alpha^{[s]}-\alpha)^2 (\gamma - (\alpha^{[s]}+\alpha)(\alpha^{2[s]}-\alpha^{[s]+1} + \alpha^2) )  , \]
which is non-zero by the conditions on $\gamma$. Hence the rank of the matrix from \eqref{eq1} is $4$, 
which implies that $\mathcal{C}$ is non-Gabidulin.
\end{proof}

\begin{example}
 Let $q=2, m=8$ and $\alpha$ a root of $x^8 +x^4+x^3+x^{2}+1$.   
Then $\gamma=1$ fulfills the conditions that $\gamma \neq  (\alpha^{[s]}+\alpha)(\alpha^{2[s]} + \alpha^{[s]+1} +\alpha^2)$ for any $0<s<m$ with $\gcd (s,m)=1$. 
Now the code with generator matrix
\[G = \left( \begin{array}{ccccc}
             1&0&\alpha&\alpha^2& \alpha^3 \\
	    0&1&\alpha^2 & \alpha^4 &  \alpha
            \end{array}\right)
\]
is an MRD but not a generalized Gabidulin code.
\end{example}

Note that, analogously to the constructions of length $4$ from the previous subsection, one can use the construction from Theorem \ref{thm:construction5} to construct non-Gabidulin MRD codes, also if $5\leq m\leq 7$. One simply needs to check that the minimal polynomial of $\alpha$ is such that all the inequations arising from $G\; \mathrm{UT^*}_5(q)$ hold.


\section{Conclusion}\label{sec:conclusion}

In this work we give a new criterion to check if a given matrix generates a linear MRD code. Moreover, we derive a criterion to check if a given generator matrix belongs to a linear generalized Gabidulin code or not. Although the criterion itself is quite simple, the proof of it involves several, to our knowledge new, technical lemmas on the Frobenius map, as well as the $\F_q$-rank and linear independence of elements in $\F_{q^m}$. 

We then use these results to construct linear MRD codes that are not generalized Gabidulin codes. Since the class of Gabidulin codes is closed under the semilinear isometries (also called equivalencies by some authors) this means that these codes are also not semilinearly isometric (or equivalent) to generalized Gabidulin codes.

In future work we want to use these criteria to find more general constructions for non-Gabidulin MRD codes. Moreover, we would like to classify all linear MRD codes and see how many different classes of codes there are, for given parameter sets.

We also believe that the results of this paper are interesting from a cryptographic point of view. Especially for the cryptanalysis of McEliece-type cryptosystems based on Gabidulin codes  \cite{Gabidulin91,Loidreau10} our criteria for MRD and Gabidulin codes might lead to new and more efficient attacks. First results in this direction can be found in \cite{ho15a}, and we would like to pursue this line of research further in the future.

\section*{Acknowledgement}

The authors would like to thank Heide Gluesing-Luerssen and Joachim Rosenthal for their valuable comments and advise on this work.

\bibliographystyle{plain}
\bibliography{network_coding_stuff,biblio}

\begin{thebibliography}{10}

\bibitem{be03}
T.~P. Berger.
\newblock Isometries for rank distance and permutation group of {G}abidulin
  codes.
\newblock {\em IEEE Transactions on Information Theory}, 49(11):3016 -- 3019,
  nov. 2003.

\bibitem{co15}
A.~Cossidente, G.~Marino, and F.~Pavese.
\newblock Non-linear maximum rank distance codes.
\newblock {\em preprint}, 2015.

\bibitem{cr15}
J.~de~la Cruz, M.~Kiermaier, A.~Wassermann, and W.~Willems.
\newblock Algebraic structures of {MRD} codes.
\newblock {\em arXiv:1502.02711}, [cs.IT], 2015.

\bibitem{de78}
P.~Delsarte.
\newblock Bilinear forms over a finite field, with applications to coding
  theory.
\newblock {\em Journal of Combinatorial Theory, Series A}, 25(3):226--241,
  1978.

\bibitem{ga85a}
E.~M. Gabidulin.
\newblock Theory of codes with maximum rank distance.
\newblock {\em Problemy Peredachi Informatsii}, 21(1):3--16, 1985.

\bibitem{Gabidulin91}
E.~M. Gabidulin, A.~V. Paramonov, and O.~V. Tretjakov.
\newblock Ideals over a non-commutative ring and their application in
  cryptology.
\newblock In {\em Proceedings of the 10th Annual International Conference on
  Theory and Application of Cryptographic Techniques}, EUROCRYPT'91, pages
  482--489, Berlin, Heidelberg, 1991. Springer-Verlag.

\bibitem{gi10}
M.~Giorgetti and A.~Previtali.
\newblock {G}alois invariance, trace codes and subfield subcodes.
\newblock {\em Finite Fields Appl.}, 16(2):96--99, 2010.

\bibitem{ho15a}
A.{-}L. Horlemann{-}Trautmann, K.~Marshall, and J.~Rosenthal.
\newblock Extension of {O}verbeck's attack for {G}abidulin based cryptosystems.
\newblock {\em arXiv:1511.01549}, [cs.CR], 2015.

\bibitem{ks05}
A.~Kshevetskiy and E.~Gabidulin.
\newblock The new construction of rank codes.
\newblock In {\em Proceedings of the International Symposium on Information
  Theory (ISIT) 2005}, pages 2105--2108, Sept 2005.

\bibitem{li94}
R.~Lidl and H.~Niederreiter.
\newblock {\em Introduction to Finite Fields and their Applications}.
\newblock Cambridge University Press, Cambridge, London, 1994.
\newblock Revised edition.

\bibitem{Loidreau10}
P.~Loidreau.
\newblock Designing a rank metric based {M}c{E}liece cryptosystem.
\newblock In {\em Proceedings of the Third International Conference on
  Post-Quantum Cryptography}, PQCrypto'10, pages 142--152, Berlin, Heidelberg,
  2010. Springer-Verlag.

\bibitem{mo14}
K.~Morrison.
\newblock Equivalence for rank-metric and matrix codes and automorphism groups
  of {G}abidulin codes.
\newblock {\em IEEE Transactions on Information Theory}, 60(11):7035--7046, Nov
  2014.

\bibitem{sh15}
J.~Sheekey.
\newblock A new family of linear maximum rank distance codes.
\newblock {\em arXiv:1504.01581 [cs.IT]}, 2015.

\bibitem{wa96}
Z.-X. Wan.
\newblock {\em Geometry of matrices}.
\newblock World Scientific, Singapore, 1996.
\newblock In memory of Professor L.K. Hua (1910 -- 1985).

\end{thebibliography}

\end{document}